\DeclareMathOperator*{\argmin}{arg\,min}
\mathchardef\Gamma="0100 \mathchardef\Delta="0101
\mathchardef\Theta="0102 \mathchardef\Lambda="0103
\mathchardef\Xi="0104 \mathchardef\Pi="0105
\mathchardef\Sigma="0106 \mathchardef\Upsilon="0107
\mathchardef\Phi="0108 \mathchardef\Psi="0109
\mathchardef\Omega="010A
\newcommand{\outline}[1]{}
\newcommand{\ie}{\emph{i.e.}\xspace}
\newtheorem{lemma}{Lemma}[section]
\newtheorem{theorem}{Theorem}[section]
\newcommand{\Comment}[1]{}
\newtheorem{definition}{Definition}
\begin{document}

\title{Understanding Relative Network Delay in Micro-Energy Harvesting Wireless Networks}

\author{
  \IEEEauthorblockN{
    Qi Chen\IEEEauthorrefmark{1} and
    Qing Yang\IEEEauthorrefmark{2}
  }
  
  \IEEEauthorblockA{
    \textit{Department of Computer Science and Engineering} \\
    \textit{University of North Texas, USA}
  }
  
  \IEEEauthorblockA{
    \IEEEauthorrefmark{1}\{QiChen\}@my.unt.edu,
    \IEEEauthorrefmark{2}\{Qing.Yang\}@unt.edu
  }
}

\maketitle

\begin{abstract}
Micro-energy harvesting wireless network (MEHWN) enables a perpetual network deployment that cannot be achieved in traditional battery-operated counterparts. Despite its sustainability, end-to-end delay in an MEHWN could be very large, due to the large waiting delay on each hop in the network. In this work, we consider an MEHWN where every node constantly switches between on and off states, due to the limited amount of harvested energy. The network delay of an MEHWN is not well understood because of the energy uncertainty, asynchronized working schedules, and complex network topology in an MEHWN. To close this research gap, we define the relative network delay as the ratio between end-to-end delay and distance. Compared to previous works, we are able to identify a closed-form expression of the lower bound and a tighter upper bound of the relative network delay. The theoretical findings are verified in simulations. Our theoretical analysis deepens the understanding about the interplay of network delay, energy harvesting rate, and node density in an MEHWN.
\end{abstract}

\begin{IEEEkeywords}
Micro-energy harvesting, wireless ad hoc networks, relative network delay, percolation theory
\end{IEEEkeywords}

{\sloppy
\section{Introduction}
Micro-scale energy-harvesting technologies that scavenge milliwatts/microwatts from ambient sources can provide power to wireless devices/sensors to form a micro-energy harvesting wireless network (MEHWN)~\cite{yang2017, xue2017water, survey}.
For common commercial of the shelf (COTS) wireless devices, their energy consumption is usually hundreds of milliwatts per second. 
As a result, the energy harvested in an MEHWN is not enough to continuously provide power to COTS wireless devices.
Consequently, a device in the MEHWN will periodically switch between on and off: when the device is off, energy is harvested and  stored in either a battery or super capacitors; after enough energy is collected, the device will be turned on. 
This type of network uses virtually inexhaustible energy sources and has little or no adverse environmental effects.
Potential applications of an MEHWN span from structural health monitoring~\cite{structural}, sustainable sensor networks~\cite{sensornet}, to cognitive networks~\cite{cognet} and the Internet of Things (IoT)~\cite{IoT}.

\subsection{Problem Statement}
MEHWN is a new type of wireless network that is different from existing ones, which brings more research challenges. 
For example, the end-to-end delay in an MEHWN will be significantly different from that in existing wireless networks because the time needed for nodes to harvest enough energy will dominate the network delay. 
In an MEHWN, we know the expected waiting delay on each hop is very large, e.g., minutes to hours~\cite{yang2017}.
To better understand the end-to-end delay in an MEHWN, it is valuable to investigate the relative network delay, defined as the delay-to-distance ratio.
Thanks to the percolation theory~\cite{meester1996continuum} and first passage time~\cite{firstpassage}, it is possible to study the theoretical upper and lower bounds of an MEHWN's relative network delay.
Theoretical findings about relative network delay will provide guidelines in designing and deploying an MEHWN.

\subsection{Limitations of Prior Art}
There exist several works on energy harvesting based wireless networks~\cite{ yang2012optimal,xu2014throughput,ng2013energy}.
Most existing works, however, mainly focus on single-hop energy harvesting wireless networks, i.e., there are fewer works on multi-hop MEHWNs~\cite{Zhu2013}. 
Recently, Shizhen \textit{et al}. proposed the seminal work of using percolation theory to study the fundamental relationship between node density and delay in a wireless ad hoc network with unreliable links~\cite{Wang11}. 
To find the lower bound of relative network delay, they discovered there exists a set of clusters in the network from the source to destination. 
A cluster consists of a group of nodes that are instantaneously connected to each other via multi-hop wireless communications.
The size of a cluster is defined as the largest distance between two nodes in it.
The cluster to cluster transmission is considered a series of outbursts. 
During each outburst, several new nodes are connected, and finally the destination is reached. 
Based on this concept, an approximated lower bound of relative network delay is found in~\cite{Wang11}.
However, a closed-form expression of the expected cluster size is not provided.
We close this gap by identifying an upper bound of the expected cluster size, and provide a closed-form expression of the lower bound of relative network delay.

\subsection{Technical Challenges and Proposed Solutions}
The first technical challenge is computing the upper bound of the expected cluster size.
It is a challenge because it is difficult, if not impossible, to know which node is included in which cluster.
We address this challenge by coupling an MEHWN with a square lattice. 
For each cluster in the MEHWN, we identify a corresponding connected component that consists of a set of connected edges in the lattice. 
As such, we are able to use a connected component to approximate a cluster.

The second technical challenge is obtaining the upper bound of a connected component's diameter.
The diameter of a connected component is measured by the number of edges along the horizontal direction.  
This is a challenge because a connected component could be in various forms, e.g., linear, grid, or most likely irregular shapes. 
To address this challenge, we first compute the probability that an $n$-size connected component (containing $n$ vertices) has the diameter of $k$.
Then, the expected diameter is computed by considering all possible $k$'s.

The third technical challenge is identifying a closed-form expression of the lower bound of relative network delay.
This is a challenge because a connected component's diameter measures the horizontal distance of the underlying cluster, not the actual cluster's size.
We address this challenge by rotating the original network, \ie, we treat the line from the source to the destination as the new horizontal axis.
In this way, we can use a connected component's diameter to approximate a cluster's size. 
Finally, we are able to identify a closed-form expression of the relative network delay's lower bound.

\section{Background and System Models}
\label{sec:background}

\subsection{Random Connection Model}
Our study considers the scenarios where long-term connectivity exists in an MEHWN, \ie, a packet could eventually be delivered from a node to another in the network. 
We make this assumption because an MEHWN would be useless if long-term connectivity does not exist.
To model the node deployment in an MEHWN, we adopt the random connection model (RCM) that is commonly used to study network connectivity in large-scale networks.
In the RCM, denoted by $G(\lambda, r_0, g)$, nodes are distributed according to the Poisson Point Process~\cite{ppp} with the density $\lambda$. 
The communication range between two nodes is $r_0$. 
A communication link exists between two nodes with a probability of $g$.

Based on the percolation theory~\cite{meester1996continuum}, there exists a critical node density $\lambda_c$, such that when $\lambda \geq \lambda_c$, there is a giant cluster containing almost all nodes in the network. 
The giant cluster, denoted by $\mathcal{C}(G(\lambda, r_0, g))$, is called the \textit{giant component}~\cite{meester1996continuum}. 
In this case, the network is called \textit{percolated}. 
If $\lambda < \lambda_c$, there is no giant component but several smaller clusters in the network, and each of them contains finite number of nodes.

Using the RCM, we consider an MEHWN can be percolated in two different ways.
With a high node density, an MEHWN could be connected at any time, and we call it instantaneously percolated.
In this case, its node density must be greater than or equal to the \textit{instantaneous critical density}, denoted by $\lambda_I$. 
At a certain time $t$, let $G_t(\lambda, r_0, g)$ denote the instantaneous graph of $G(\lambda, r_0, g)$, we define $\lambda_I$ as follows.
\begin{definition}
Instantaneous critical density $\lambda_I$ is
\begin{equation}
\lambda_I = \inf \{ \lambda |\text{$\forall t$, $G_t(\lambda, r_0, g)$ is percolated}\}.
\end{equation}
\end{definition}

When the node density is low, an MEHWN can still be percolated in the long run. 
Here, we define the \textit{long-term critical density}, denoted by $\lambda_L$, as follows.

\begin{definition}
Long-term critical density $\lambda_L$ is
\begin{equation}
 \lambda_L = \inf \{ \lambda |\text{$ G'(\lambda, r_0, g)$ is percolated}\},
\end{equation}
where $\displaystyle G'(\lambda, r_0, g) = \cup_{t}^{\infty} G_t(\lambda, r_0, g)$ contains the nodes and edges from all $G_t(\lambda, r_0, g)$.
\end{definition}

\subsection{Energy Harvesting Model}
It is difficult and complex to accurately model the energy harvesting process on a node in an MEHWN.
The amount of energy available on a node depends on not only the harvesting technique, energy storage methods, and the type of ambient sources, but also the network traffic generated and forwarded by this node.
To enable the analysis of relative network delay in an MEHWN, we adopt the binary energy harvesting model where a node is considered to have a unit energy storage capacity, i.e., the energy harvested on a node may overflow and get wasted~\cite{tutuncuoglu2014state,tutuncuoglu2014improved,ozel2015binary}.
The energy availability of a node is then modeled as an ON/OFF process $\{ X_i(t), t \geq 0\}$ 
\begin{equation*}
X_i(t) = \left\{
\displaystyle
  \begin{array}{l l}
    1 & \quad \text{if node $i$ is available at time $t$,}\\
    0 & \quad \text{otherwise.}
  \end{array} \right.
\end{equation*}

We assume $\{X_i(t)\}$ is an alternative renewal process, and define the probability of node $i$ being active as $q_i$. 
We assume the value of $q_i$ could be obtained from analyzing off-line/historical data~\cite{fan2008steady}. 
Therefore, two node $i$ and $j$ are connected with the probability of $g_{ij} = q_i q_j $, if the distance between nodes $i$ and $j$ is smaller than the communication range.
Here we assume $q_i$ is the same for every node, denoted by $q$, then the connecting probability between any two nodes can be represented as $q^2$.

\subsection{Network Model}
Now we will introduce the network model where the networking time is divided into slots.
Due to slow energy harvesting rates (mW/s or $\mu$W/s), we assume a few seconds/minutes waiting delay exists on each wireless link.
Compared to the delays in wireless signal propagation, information processing, and transmission scheduling that are usually less than a second, the waiting delay on each link will dominate the end-to-end delay in an MEHWN.
In this work, we ignore others but focus on the waiting delay on each link in the network. 

If a sender fails to deliver a packet in a time slot (with the probability of $1-g$), we assume it tries to re-transmit this packet in the next time slot.
So the waiting delay on this link can be modeled as
\[
\Pr\{T(e) = z\} = (1-g)^z g,
\]
where $T(e)$ denotes the waiting delay on the link $e$ and $z$ denotes the number of time slots.
Therefore, the expected waiting delay is
\[
E(T(e)) = \frac{1}{g}-1
\]

Given $E(T(e))$, we adopt the first passage time~\cite{firstpassage} to model the end-to-end delay in an MEHWN. 
In the RCM $G(\lambda, r_0, g)$, the first passage time from nodes $u$ to $v$ is defined as
\begin{equation}
T_\lambda(u, v) = \inf \{ T(\pi)|\text{$\pi$ is a pass from $u$ to $v$}\},
\end{equation}
where $T(\pi)$ denotes the time needed for the packet to cross the path $\pi$. 
Therefore, the relative network delay from node $u$ to $v$ can be defined as follows~\cite{Liggett}.
\begin{definition}
\label{DelayDef}
The relative network delay of delivering a packet from node $u$ to $v$ in the network $G(\lambda, r_0, g)$ is
\begin{equation}
\gamma(\lambda) = \lim_{d(u,v) \to \infty } \frac{T_\lambda(u, v)}{d(u, v)},
\end{equation}
where $u$ and $v$ belong to the giant component $\mathcal{C}(G(\lambda, r_0, g))$ and $d(u, v)$ measures the Euclidean distance between nodes $u$ and $v$. 
\end{definition} 
\begin{figure*}[!h]
        \centering
        \subfigure[]{%
		\includegraphics[width=1.3in]{./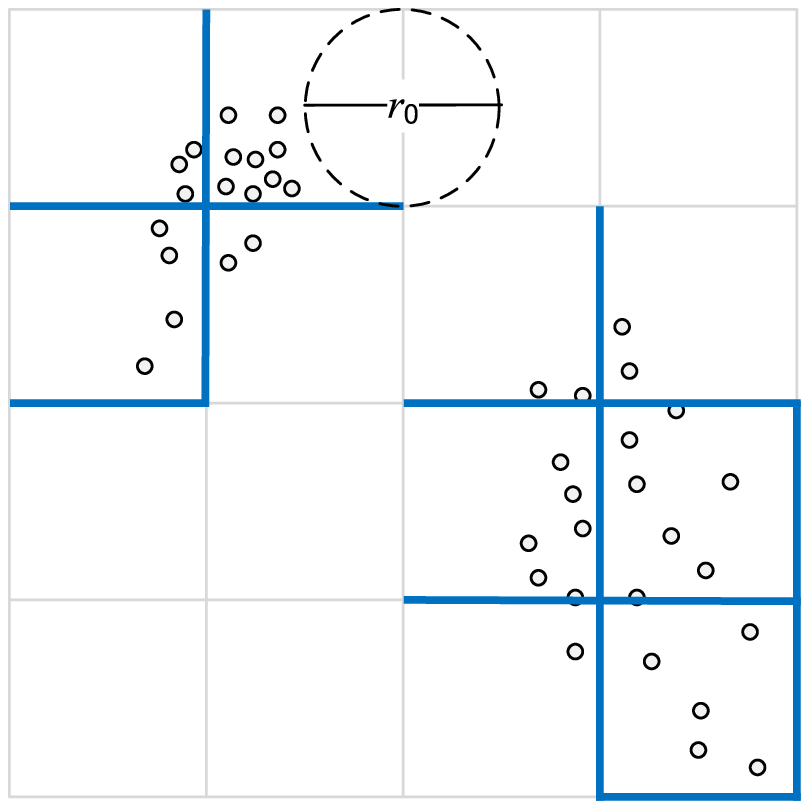}%
		\label{fig:lattice:a}%
		}
		\subfigure[]{%
		\includegraphics[width=1.3in]{./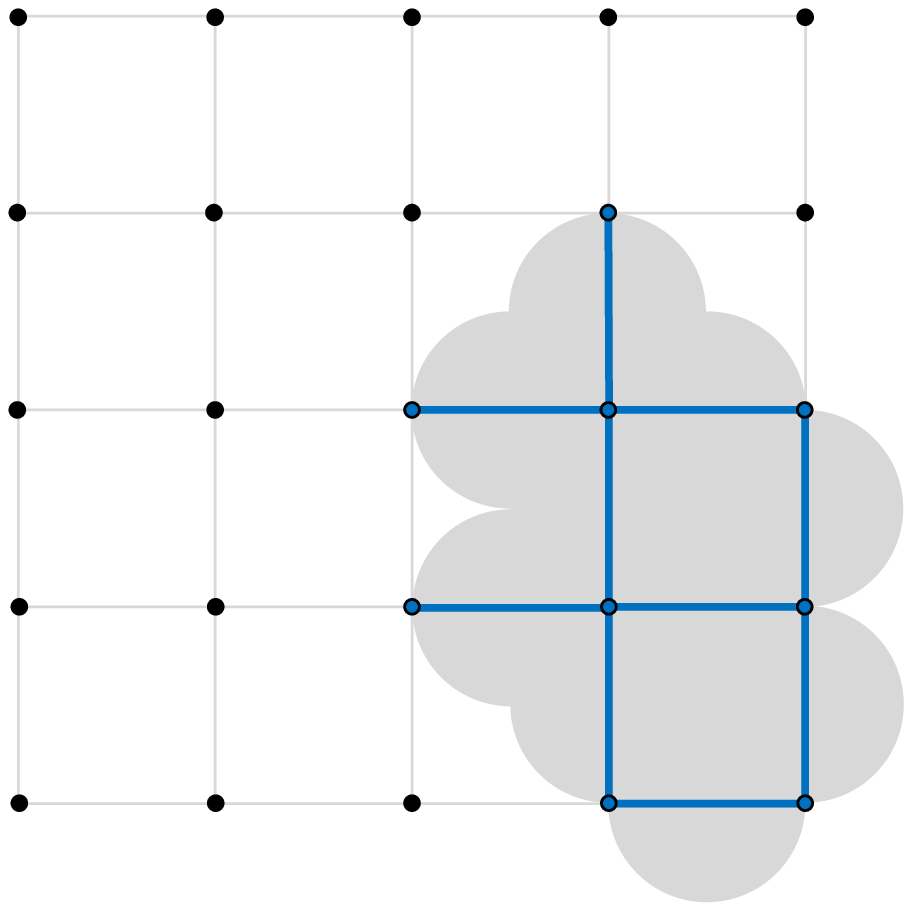}%
		\label{fig:lattice:b}%
		}
		\subfigure[]{%
		\includegraphics[width=1.3in]{./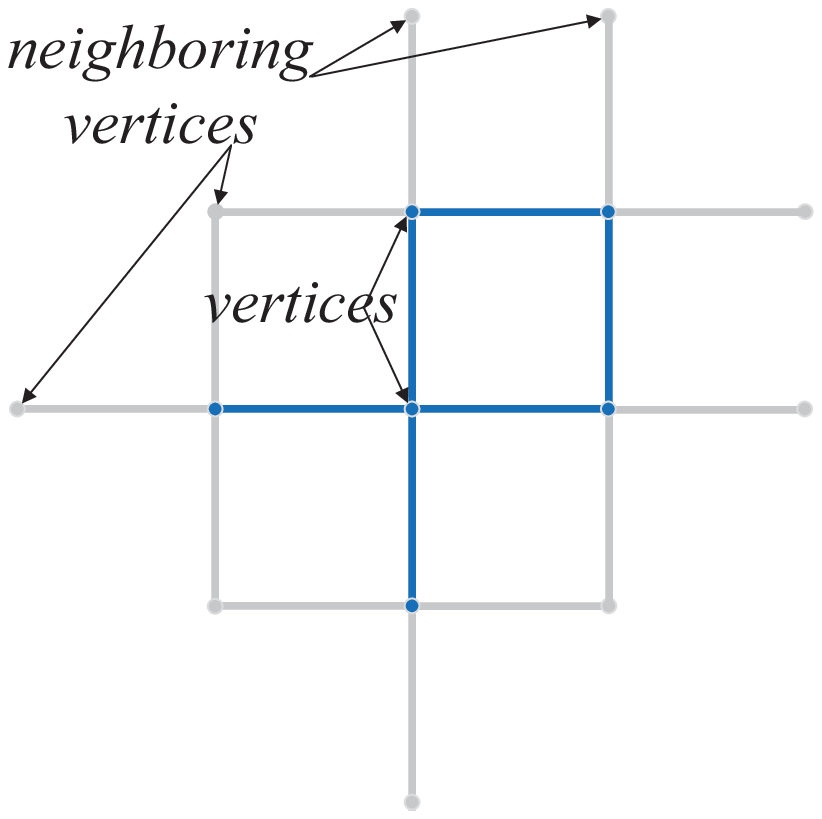}%
		\label{fig:component:a}%
		}
		\subfigure[]{%
		\includegraphics[width=1.3in]{./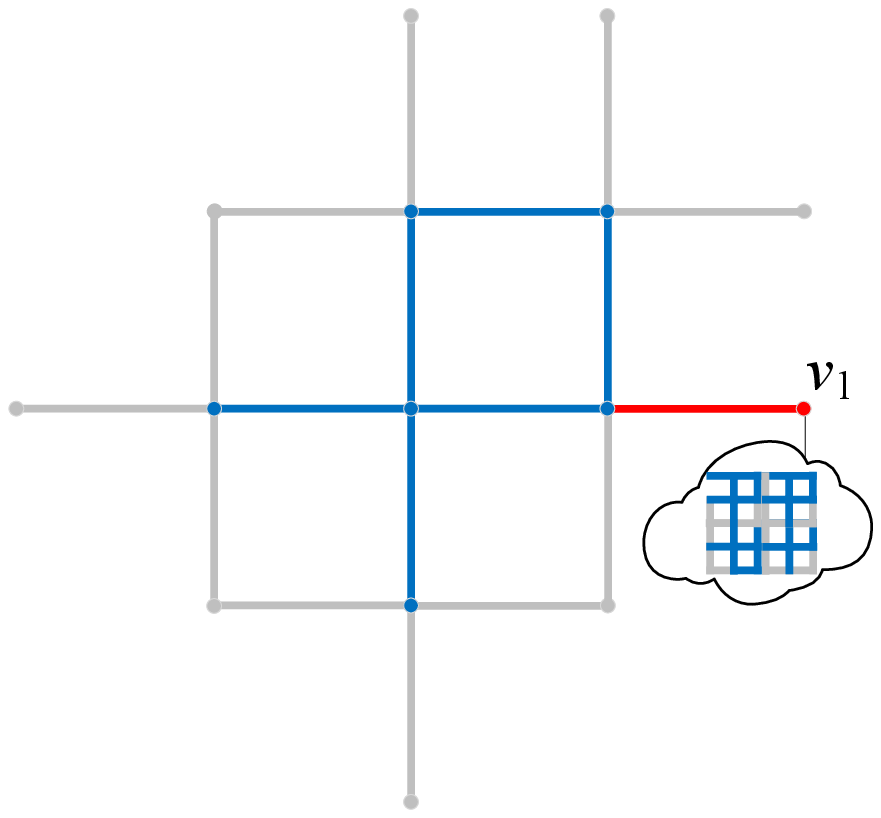}%
		\label{fig:component:b}%
		}	
		\subfigure[]{%
		\includegraphics[width=1.3in]{./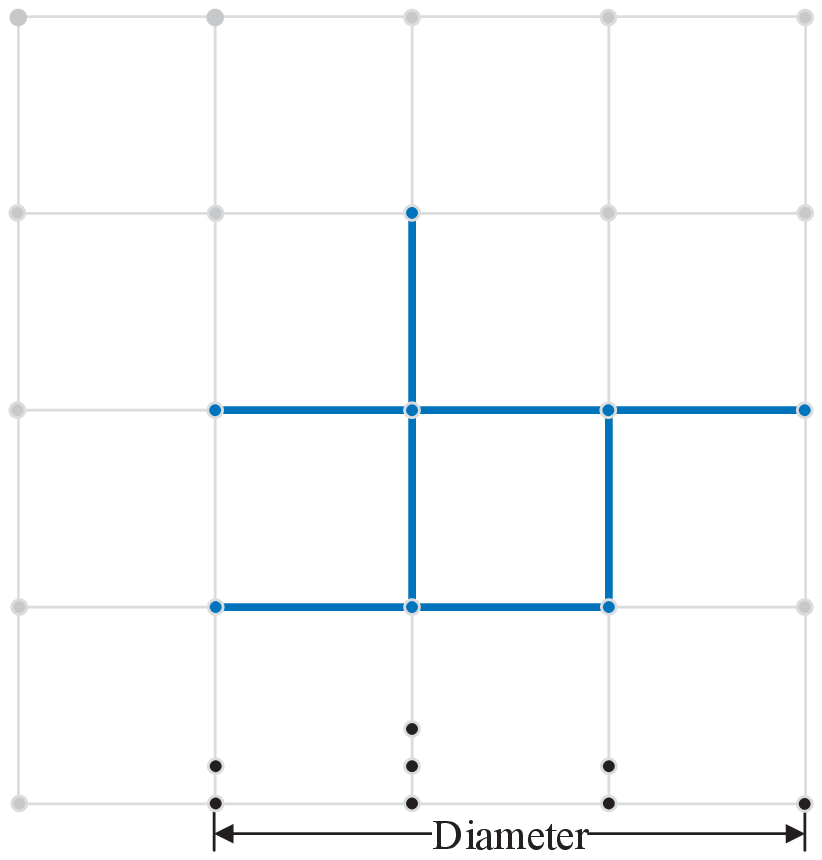}%
		\label{fig:vhedges}%
		}
        \caption{Coupling a network with a lattice. (a) Clusters in the network and corresponding connected components in the lattice. (b) The area of a connected component in the lattice. (c) An $n$-size connected component. (d) A connected component with size of $\geq n+1$. (e) Vertices from a connected component are procjected on the horizontal direction.}\label{fig:lattice}
\end{figure*}

\section{Connected Component Size}
\label{sec:ccsize}
To compute the expected cluster size in an MEHWN, we consider a square lattice $\mathcal{L}$ on top of the network $G(\lambda, r_0, g)$, which is shown in Fig.~\ref{fig:lattice:a}. The length of each edge in the lattice is $r_0$. 

\begin{definition}
\label{Occupy}
Given an edge $e_j$ in the lattice, a circle $S_{e_j}$ can be drawn with the $e_j$ as $S_{e_j}$'s diameter. The edge $e_j$ is called \textbf{occupied} if there is at least one node in the circle; otherwise, it is \textbf{not occupied}.
\end{definition}

As node deployment follows the Poisson Point Process, the probability that an edge is occupied is
\begin{equation}
\label{eq:p}
p= 1- e^{-\lambda \sqrt g \frac{\pi r_0^2}{4}},
\end{equation}
where $\frac{\pi r_0^2}{4}$ is the area of the circle, and $g$ is the probability that a link exists between two nodes. 
Because $\sqrt g < 1$ denotes the probability that a node collects enough energy to be active, the real node density in an MEHWN is smaller than $\lambda$. 
According to the Thinning Theorem~\cite{thinnigThm}, the node density in an MEHWN is reduced from the original $\lambda$ to $\lambda \sqrt g$.

\begin{definition}
\label{ConnectedComponent}
In the lattice $\mathcal{L}$, a set of connected and occupied edges are defined as a \textbf{connected component}, denoted as $\mathcal{C}(\mathcal{L})$.
\end{definition}

Given the probability $p$ of an edge being occupied, we could find several connected components in the lattice. On the other hand, given a cluster in the network, if all the corresponding occupied edges are collected, it ends up with a connected component. Note that for any cluster in network, there must be a unique corresponding connected component in the lattice. However, a connected component in the lattice may correspond to a node, a cluster, or a few adjacent clusters in the network. In summary, connected components provide a larger estimation for clusters in the network.

\begin{definition}
\label{Area}
The \textbf{area} of a connected component $\mathcal{C}(\mathcal{L})$ is defined as
\[
\displaystyle \mathcal{A} (\mathcal{C}(\mathcal{L})) = \cup_{e_j} S_{e_j},
\]
where $\{ e_j \}$ are the edges composing the connected component.
\end{definition}

The shaded area in Fig.~\ref{fig:lattice:b} gives the area of an example connected component, depicted by blue lines. Based on this definition, we know $\mathcal{A}(\mathcal{C}(\mathcal{L}))$ perfectly covers the corresponding cluster(s) in $G(\lambda, r_0, g)$ from the following lemma. 

\begin{lemma}
\label{lm:lattice}
Given the RCM $G(\lambda, r_0, g)$ and a corresponding square lattice $\mathcal{L}$, $\mathcal{A} (\mathcal{C}(\mathcal{L}))$ provides the maximum area where nodes in the underlying cluster(s) could reside. 
\end{lemma}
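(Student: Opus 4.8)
The plan is to prove the lemma in two movements. The first is a purely geometric covering fact: the family of circles $\{S_e\}$ attached to all lattice edges covers the entire plane, so every node of the cluster(s) underlying $\mathcal{C}(\mathcal{L})$ must lie inside some circle $S_{e_j}$ with $e_j\in\mathcal{C}(\mathcal{L})$, hence inside $\mathcal{A}(\mathcal{C}(\mathcal{L}))$. The second is the remark that $\mathcal{A}(\mathcal{C}(\mathcal{L}))$, being a union of full discs, can only over-estimate the region actually occupied by the cluster(s) --- which is exactly what ``maximum area'' asserts.

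For the covering fact I would argue as follows. It suffices, by translation, to show that every point $P$ of one lattice cell $[0,r_0]^2$ lies in $S_e$ for one of the four edges $e$ of that cell, and by the reflection symmetries of the square it suffices to treat the sub-square $0\le x\le r_0/2$, $0\le y\le r_0/2$. There I compare $P=(x,y)$ with the midpoints $m_1=(r_0/2,0)$ and $m_2=(0,r_0/2)$ of the two incident edges $e_1,e_2$: a short computation gives $|P-m_1|^2+|P-m_2|^2 = 2x^2-r_0x+2y^2-r_0y+r_0^2/2$, and on the sub-square $x(2x-r_0)\le 0$ and $y(2y-r_0)\le 0$, so this sum is at most $r_0^2/2$; hence $\min(|P-m_1|,|P-m_2|)\le r_0/2$ and $P\in S_{e_1}$ or $P\in S_{e_2}$.

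Now let $u$ be any node of a cluster $\mathcal{C}$ of $G(\lambda,r_0,g)$ and let $\mathcal{C}(\mathcal{L})$ be the connected component coupled to $\mathcal{C}$ (unique, as recalled before Definition~\ref{Area}). By the covering fact $u\in S_e$ for some lattice edge $e$; since the node $u$ then lies in $S_e$, the edge $e$ is occupied (Definition~\ref{Occupy}), and being an occupied edge whose circle contains a node of $\mathcal{C}$ it is among the edges collected into $\mathcal{C}(\mathcal{L})$ (Definition~\ref{ConnectedComponent} together with the correspondence recalled above). Therefore $u\in S_e\subseteq\cup_{e_j\in\mathcal{C}(\mathcal{L})}S_{e_j}=\mathcal{A}(\mathcal{C}(\mathcal{L}))$ by Definition~\ref{Area}. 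As $u$ is arbitrary --- and, when $\mathcal{C}(\mathcal{L})$ is coupled to several adjacent clusters, applying this to each of them --- all nodes of the underlying cluster(s) reside in $\mathcal{A}(\mathcal{C}(\mathcal{L}))$. Finally, $\mathcal{A}(\mathcal{C}(\mathcal{L}))$ is a union of full radius-$r_0/2$ discs, one per occupied edge, so it generally extends beyond the footprint of the cluster(s) and may even cover points holding no node; it is thus the largest (most conservative) region into which the coupling forces those nodes, which is the ``maximum area'' claim.

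The step I expect to be the real obstacle is the geometric covering fact: the four half-edge-length discs on the sides of a cell only just cover the cell --- the cell center is extremal and lies exactly on all four circles --- so the inequality is tight, and the reduction by symmetry to comparing against only two incident midpoints is what makes the estimate go through. A minor side issue is the degenerate case of a node sitting exactly at a cell center (at distance exactly $r_0/2$ from the incident midpoints); this has probability zero under the Poisson Point Process, or is absorbed by reading $S_e$ as a closed disc, so it does not affect the statement.
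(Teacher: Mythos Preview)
Your proof is correct, and your explicit handling of the geometric covering fact is a genuine addition: the paper's own argument silently assumes that every node lies in some $S_{e_j}$, whereas you actually prove it with the midpoint-sum inequality. After that point the two arguments diverge. The paper proceeds by contradiction: it supposes a cluster node $u$ lies outside $\mathcal{A}(\mathcal{C}(\mathcal{L}))$, places $u$ in some $S_{e_j}$ with $e_j$ occupied but $e_j\notin\mathcal{C}(\mathcal{L})$, observes that none of $e_j$'s six lattice-adjacent edges can lie in $\mathcal{C}(\mathcal{L})$ (else $e_j$ would be absorbed by maximality), and concludes that $u$ is more than $r_0$ from every node of the underlying cluster --- a contradiction. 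You instead argue directly from the cluster--component correspondence asserted just before Definition~\ref{Area}: the occupied edge $e$ with $u\in S_e$ is, by that correspondence, one of the edges collected into $\mathcal{C}(\mathcal{L})$, so $u\in\mathcal{A}(\mathcal{C}(\mathcal{L}))$ immediately. Your route is shorter and avoids the delicate metric claim about the six neighbours (which the paper does not justify in detail), at the cost of leaning on the informal correspondence; the paper's route tries to be self-contained at the lattice level but trades that for an unproved distance estimate. Either way the content is the same, and your version would serve as a cleaner replacement.
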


\begin{proof}

Assume to the contrary that there is a node in the underlying cluster(s) that is located outside $\mathcal{A}(\mathcal{C}(\mathcal{L}))$.
Suppose the node is located in $S_{e_j}$ that corresponds to an occupied edge $e_j$. 
Based on our assumption, $e_j$ is not connected to $\mathcal{C}(\mathcal{L})$.
That also means none of $e_j$'s six neighboring edges belongs to $\mathcal{C}(\mathcal{L})$; otherwise, $e_j$ can be connected to $\mathcal{C}(\mathcal{L})$ through that edge.
That is to say, the node is more than $r_0$ away from the any node in the underlying cluster, contradicting our assumption that the node is within the cluster.
Therefore, the lemma is proved.

\end{proof}

As a connected component provides a larger estimate of the underlying cluster(s), it is important to know the expected size of a connected component. We define the size of a connected component as follows. 
\begin{definition}
The \textbf{size} of a connected component $\mathcal{C}(\mathcal{L})$ is defined as the number of vertices in it, denoted by $\mathcal{S}(\mathcal{C}(\mathcal{L}))$.
\end{definition}

According to the definition, there exist $n$ vertices in an $n$-size connected component that are connected to each other.
We call these vertices as \textit{connected vertices}.
For each vertex in the connected component, there must be zero to three non-connected vertices around it.
We call these non-connected vertices \textit{neighboring vertices}. 
For example, Fig.~\ref{fig:component:a} shows a connected component with $6$ connected vertices and $9$ neighboring vertices. 
It is clear that an $n$-size connected component contains at most $(2n+2)$ neighboring vertices, corresponding to a linear topology.

We use $\Pr\{\mathcal{S}(\mathcal{C}(\mathcal{L})) = n\}$ to denote the probability an arbitrary connected component has the size of $n$.
For the sake of simplicity, we use $\mathcal{S}$ to denote $\mathcal{S}(\mathcal{C}(\mathcal{L}))$.
Then, we can use $\Pr\{\mathcal{S} \geq n\}$ to denote the probability that an arbitrary connection component has the size of at least $n$. 

Given an $n$-size connected component, if it connects to one of its neighboring vertices, a new connected component with a size greater or equal to $n+1$ will be generated. 
This is because the newly connected vertex, e.g., $v_1$ in Fig.~\ref{fig:component:b}, may belong to another connected component.
This important observation will be used to compute the upper bound of $\Pr\{\mathcal{S} =n\}$.

\begin{lemma}
The probability of $\Pr\{\mathcal{S} = n\}$ satisfies
\begin{equation}
 \Pr\{\mathcal{S} = n\} \leq p \prod_{k=3}^n \frac{ 2kp }{2kp + p +1} .
\end{equation}
\end{lemma}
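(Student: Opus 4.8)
The plan is to prove the bound by induction on $n$, modeling the formation of a connected component as a step-by-step exploration: we start from a single occupied edge and, at each step, the current partial component either \emph{freezes} at its present size or \emph{grows} by connecting to one of its neighboring vertices. The idea is to chain one-step conditional probabilities
\[
\Pr\{\mathcal{S} \ge k\} = \Pr\{\mathcal{S} \ge k-1\}\cdot\Pr\{\mathcal{S} \ge k \mid \mathcal{S} \ge k-1\}
\]
down to a base case and then use the trivial bound $\Pr\{\mathcal{S} = n\} \le \Pr\{\mathcal{S} \ge n\}$; one could equally run the recursion on $\Pr\{\mathcal{S} = n\}$ directly, with essentially the same bookkeeping.

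For the base case, the smallest possible connected component is one occupied edge together with its two endpoints, so $\Pr\{\mathcal{S} = 2\}$ is at most the probability that a single edge is occupied, which is $p$ by~\eqref{eq:p}; this matches the claimed bound at $n = 2$ under the empty-product convention. For the inductive step I would invoke the observation stated just before the lemma: once a partial component has $k-1$ vertices it has at most $2(k-1)+2 = 2k$ neighboring vertices, and occupying any of the boundary edges leading to them produces a component of size at least $k$. The per-step factor $\tfrac{2kp}{2kp+p+1}$ would then come from weighing the "grow" outcome (at least one of $\le 2k$ boundary edges occupied, each with probability $p$, contributing the $2kp$ in the numerator) against the "freeze" outcome, whose normalization together with the edge just added to reach size $k-1$ accounts for the extra $p+1$ in the denominator. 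Multiplying the factors for $k = 3, \dots, n$ onto the base value $p$ yields the claim.

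The step I expect to be the real obstacle is making the one-step inequality $\Pr\{\mathcal{S} \ge k \mid \mathcal{S} \ge k-1\} \le \tfrac{2kp}{2kp+p+1}$ rigorous, for two reasons. First, the edge-occupation events are \emph{not} independent, since the circles $S_{e_j}$ of neighboring lattice edges overlap; conditioning on part of the component being occupied perturbs the law of the boundary edges, so one must either verify that the inequality survives this conditioning or invoke an FKG-type correlation inequality to push the "freeze" probability in the favorable direction. Second, the count and layout of a component's neighboring vertices depend on its shape, so one has to argue that the linear topology is genuinely the worst case for the bound $2k$ and that the chosen exploration order keeps the conditioning clean at every step. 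Pinning down the precise constant in the denominator is the delicate part; the remaining telescoping is routine.
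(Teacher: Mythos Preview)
Your overall skeleton matches the paper: a recursion in $n$, the base case $\Pr\{\mathcal S=2\}\le p$, and the bound $s_{n-1}\le 2(n-1)+2=2n$ on the number of neighboring vertices of an $(n{-}1)$-component. Where your plan diverges is in how the one-step factor $\tfrac{2np}{2np+p+1}$ is produced. Your ``grow vs.\ freeze'' normalization story, with the extra $p+1$ attributed to ``the edge just added,'' is not the mechanism the paper uses, and as stated it does not obviously yield that specific denominator.

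The paper instead runs the recursion on $\Pr\{\mathcal S=n\}$ directly (the alternative you mention in passing) via the identity $\Pr\{\mathcal S=n\}=\Pr\{\mathcal S\ge n\}-\Pr\{\mathcal S\ge n+1\}$. Each tail is written as a boundary-weighted sum,
\[
\Pr\{\mathcal S\ge n\}=\sum_i \Pr\{\mathcal C_{n-1}^i\}\,p\,s_{n-1}^i,\qquad
\Pr\{\mathcal S\ge n+1\}=\sum_j \Pr\{\mathcal C_{n}^j\}\,p\,s_{n}^j,
\]
and then the $n$-components are \emph{regrouped} according to the $(n{-}1)$-component obtained by deleting the vertex with the most neighbors. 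The crucial combinatorial input is not just $s_{n-1}^i\le 2n$ but also $s_n^{ij}\ge s_{n-1}^i+1$: adding a vertex to $\mathcal C_{n-1}^i$ increases the neighbor count by at least one. Plugging both bounds into the difference gives
\[
\Pr\{\mathcal S=n\}\le 2np\bigl(\Pr\{\mathcal S=n-1\}-\Pr\{\mathcal S=n\}\bigr)-p\,\Pr\{\mathcal S=n\},
\]
so the ``$+p$'' in the denominator comes from that extra neighboring vertex, and the ``$+1$'' comes from moving the left-hand $\Pr\{\mathcal S=n\}$ across. Your proposal is missing this difference-and-regroup step and the lower bound $s_n\ge s_{n-1}+1$; without them you will not be able to justify the exact form of the factor. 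Your concerns about dependence between overlapping circles are reasonable, but the paper's argument does not address them either, so they do not distinguish your plan from the paper's.
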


\begin{proof}

For an $(n-1)$-size connected component $\mathcal{C}_{n-1}^i$, we use $s_{n-1}^i$ to denote the number of neighboring vertices it has. We assume it occurs in lattice $\mathcal{L}$ with the probability of $\Pr\{\mathcal{C}_{n-1}^i\}$.
Therefore, we have $\Pr \{\mathcal{S}=n-1 \} = \sum_i \Pr\{\mathcal{C}_{n-1}^i\}$.

In $\mathcal{C}_{n-1}^i$, we randomly select a neighboring vertex and connect it to $\mathcal{C}_{n-1}^i$, resulting in a connected component with size $\geq n$. That implies 
\[
\Pr\{\mathcal{S} \geq n\} =  \sum_i \Pr\{\mathcal{C}_{n-1}^i\} p s_{n-1}^i ,
\]
where $\Pr\{\mathcal{S} \geq n\}$ is the probability that an arbitrary connected component has the size of at least $n$. Similarly, we have 
\[
\Pr\{\mathcal{S} \geq n+1\} =  \sum_j \Pr\{\mathcal{C}_{n}^j\} p s_{n}^j ,
\]
where $s_n^j$ is the number of neighboring vertices of $\mathcal{C}_{n}^j$.
Subtracting these two equations, we have
\begin{equation}
\label{eq:psn}
\begin{split}
\Pr\{\mathcal{S} = n\} &= \Pr\{\mathcal{S} \geq n\} - \Pr\{\mathcal{S} \geq n+1\} \\
	&=  \sum_i \Pr\{\mathcal{C}_{n-1}^i\} p s_{n-1}^i -  \sum_j \Pr\{\mathcal{C}_{n}^j\} p s_{n}^j.
\end{split}
\end{equation}

For an $n$-size connected component, e.g., $\mathcal{C}_{n}^j$, we group it based on the corresponding $(n-1)$-size connected component from which it can be generated.
Without loss of generality, we assume $\mathcal{C}_{n}^j$ is generated from  $\mathcal{C}_{n-1}^i$, i.e., removing the vertex with the largest number of neighboring vertices from $\mathcal{C}_{n}^j$ will yield $\mathcal{C}_{n-1}^i$.
As such, we re-label $\mathcal{C}_{n}^j$ as $\mathcal{C}_{n}^{ij}$ indicating it belongs to the $i$th group.
We further use $\{\mathcal{C}_{n}^{ij}\}$ to denote all the  $n$-size connected components generated from $\mathcal{C}_{n-1}^i$.
It is possible that a $\mathcal{C}_{n}^j$ is generated from several $\mathcal{C}_{n-1}^i$'s. After the grouping, however, there must be exactly $| \{ \mathcal{C}_{n-1}^i \}|$ number of groups.
Therefore, Eq.~\ref{eq:psn} can be rewritten as
\[
\Pr\{\mathcal{S} = n\} = \sum_i \Pr \{ \mathcal{C}_{n-1}^i \} p s_{n-1}^i - \sum_i \Pr \{ \mathcal{C}_{n}^{ij} \} p s_{n}^{ij},
\]
where $\Pr \{ \mathcal{C}_{n}^{ij} \}$ is the sum of probabilities of all $n$-size connected components in the $i$th group.
Because $\mathcal{C}_{n}^j$ is generated from $\mathcal{C}_{n-1}^i$, it must contain at least $(s_{n-1}^i + 1) $ number of neighboring vertices. 
Therefore, we have 
\[
\Pr\{\mathcal{S} = n\} \leq
\sum_i \Pr \{ \mathcal{C}_{n-1}^i \} p s_{n-1}^i - \sum_i \Pr \{ \mathcal{C}_{n}^{ij} \} p (s_{n-1}^i +1),
\]
which can be rewritten as
\begin{equation*}
\begin{split}
 \Pr\{\mathcal{S} = n\} \leq  & \left( \Pr \{ \mathcal{S} = n-1\} - \Pr \{ \mathcal{S} = n\}\right) p s_{n-1}^i\\
& - p \Pr \{ \mathcal{S} = n\} \\
\end{split}
\end{equation*}

Because the maximum value of $s_{n-1}^i$ is $2n$, we have 
\begin{equation*}
\begin{split}
 \Pr\{\mathcal{S} = n\} \leq  & 2np \left( \Pr \{ \mathcal{S} = n-1\} - \Pr \{ \mathcal{S} = n\}\right)\\
& - p \Pr \{ \mathcal{S} = n\} \\
\end{split}
\end{equation*}
which yields
\[
\Pr\{\mathcal{S} = n\} \leq \frac{ 2np }{2np + p +1} \Pr\{\mathcal{S} = n-1\}.
\]

We know that probability $\Pr\{\mathcal{S} = 2\} \leq p$ where $p$ is the probability that two adjacent vertices are connected, so we have 

\begin{equation}
\Pr\{\mathcal{S} = n\} \leq p \prod_{k=3}^n \frac{ 2kp }{2kp + p +1} .
\end{equation}

\end{proof}

We denote the upper bound of $\Pr\{\mathcal{S} = n\}$ as $\bar p_n$, then the expected connected component size satisfies
\begin{equation}
\label{eq:componentsize}
E(\mathcal{S}(\mathcal{C}(\mathcal{L}))) \leq \sum_{n=1}^{\infty} n \bar p_n. 
\end{equation} 

\section{Connected Component Diameter}
\label{sec:ccdiameter}
\begin{definition}
\label{Def:diameter}
The \textbf{diameter} of a connected component $\mathcal{C}(\mathcal{L})$ is defined as its horizontal distance in terms of number of edges, which is denoted by $\mathcal{D}(\mathcal{C}(\mathcal{L}))$.
\end{definition}

For an $n$-size connected component $\mathcal{C}_n(\mathcal{L})$, we know there are $n$ connected vertices. 
If these $n$ vertices are projected onto the horizontal direction in the lattice, there will be 1 to $n$ points generated along the horizontal direction.  
If all vertices are projected onto one point, the connected component's diameter will be zero; if all vertices are projected onto different points, the connected component's diameter will be $n-1$.
Based on this observation, we introduce the following lemma to compute the upper bound of the diameter of $\mathcal{C}_n(\mathcal{L})$.

\begin{lemma}
\label{lm:pk}
Given an $n$-size connected component $\mathcal{C}_n(\mathcal{L})$, the probability that $\mathcal{D}(\mathcal{C}_n(\mathcal{L})) = k$ satisfies
\begin{equation}
\label{eq:pk}
p_k(n) \leq \sum_{a=k}^{n-1} C_{n-1}^{a} \left( \frac{1}{2}\right)^{n-1} \left( \frac{1}{k}\right)^{a-k}.
\end{equation}
\end{lemma}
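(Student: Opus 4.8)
The plan is to build a spanning tree $\mathcal T$ of $\mathcal C_n(\mathcal L)$ by a breadth-first search (BFS) from an arbitrary root vertex, so that $\mathcal T$ has exactly $n-1$ edges, each of which is either horizontal or vertical in the lattice. Two structural facts drive the argument. First, the diameter is controlled entirely by the horizontal tree edges: when the component is projected onto the horizontal axis a vertical edge collapses to a point while a horizontal edge moves the projected vertex to an adjacent column, and since along any lattice path the column index changes (by $\pm1$) only across horizontal edges, the set of columns occupied by $\mathcal C_n(\mathcal L)$ is always a contiguous interval; hence $\mathcal D(\mathcal C_n(\mathcal L))=k$ if and only if the component occupies exactly $k+1$ distinct columns. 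Second, by the symmetry of the square lattice (two of the four directions incident to a vertex are horizontal and two are vertical), each tree edge is equally likely to be horizontal or vertical, independently, so the number $a$ of horizontal tree edges is $\mathrm{Binomial}(n-1,\tfrac12)$ and $\Pr\{a\text{ horizontal edges}\}\le C_{n-1}^{a}\bigl(\tfrac12\bigr)^{n-1}$.

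With this in hand I would condition on $a$ and reduce the lemma to the inner estimate
\[
\Pr\{\mathcal D(\mathcal C_n(\mathcal L))=k \mid a\text{ horizontal edges}\}\ \le\ \Bigl(\tfrac1k\Bigr)^{a-k}.
\]
Here $a\ge k$ is forced, since $k+1$ contiguous columns cannot be realized with fewer than $k$ horizontal edges (this gives the lower summation limit $a=k$, while $a\le n-1$ is the upper limit). To prove the inner estimate I would process the $a$ horizontal edges in BFS order and maintain the current column interval; each horizontal edge is either \emph{column-discovering} (its new vertex falls in a column not yet occupied, enlarging the interval by one) or \emph{redundant} (its new vertex falls in an already-occupied column). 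Occupying exactly $k+1$ columns means precisely $k$ of the $a$ horizontal edges are column-discovering and the other $a-k$ are redundant; the factor $(1/k)^{a-k}$ then comes from bounding, for each of those $a-k$ redundant edges, the conditional probability that it is placed consistently with an interval of only $k+1$ columns — such an edge must attach at a prescribed position relative to the current interval and take a prescribed direction, each admissible boundary being one of on the order of $k$ equally likely alternatives. Summing $\Pr\{a\text{ horizontal edges}\}\cdot\Pr\{\mathcal D=k\mid a\}$ over $a=k,\dots,n-1$ then gives Eq.~\ref{eq:pk}.

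The main obstacle is the inner estimate $\Pr\{\mathcal D=k\mid a\}\le(1/k)^{a-k}$: the BFS attachment point of a new vertex is not exactly uniform over the currently occupied columns (boundary columns, being discovered later, tend to be under-represented), so one must show that this bias only helps — i.e.\ that the probability a redundant horizontal edge is compatible with an unchanged $(k{+}1)$-column interval is at most $1/k$ per edge — rather than hurts. Making this rigorous, either by a stochastic-domination argument that replaces the true attachment distribution by a worst-case one, or by a direct enumeration of lattice configurations conditioned on the column interval, is the crux of the proof; by contrast the two outer ingredients — the $\mathrm{Binomial}(n-1,\tfrac12)$ count of horizontal edges and the constraint $a\ge k$ — are routine once the spanning-tree viewpoint is fixed.
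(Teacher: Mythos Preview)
Your overall architecture matches the paper's --- condition on an auxiliary count $a$, give it a $\mathrm{Binomial}(n-1,\tfrac12)$ law, and then bound the conditional probability $\Pr\{\mathcal D=k\mid a\}$ by $(1/k)^{a-k}$ --- but the combinatorial object underlying $a$ is different. In the paper, $a$ is the number of \emph{vertices} whose horizontal projection differs from that of a fixed root vertex: the paper simply asserts that each of the remaining $n-1$ vertices falls in the root's column or not ``each with a probability $0.5$'', and then that the $a-k$ excess vertices must each land in one of the $k$ non-root columns, giving the factor $(1/k)^{a-k}$. Your $a$, by contrast, counts the horizontal \emph{edges} of a BFS spanning tree, and you obtain the binomial law from the horizontal/vertical symmetry of the square lattice. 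Your edge-based decomposition is more principled for the binomial step, since edge-orientation symmetry is a genuine symmetry of $\mathcal L$, whereas the paper offers no justification for why vertices should independently coincide with the root's column with probability $\tfrac12$. Conversely, the paper's treatment of the inner factor $(1/k)^{a-k}$ is just as heuristic as the obstacle you flag: it declares the bound without addressing independence or the non-uniform placement you worry about. In short, your proposal is a different (and somewhat more careful) route to the same formula; the gap you isolate in the $(1/k)^{a-k}$ step is present in the paper's proof as well and is not resolved there.
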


\begin{proof}
From the connected component $\mathcal{C}_n(\mathcal{L})$, we randomly pick one vertex. 
Suppose the vertex is projected to a point located at $x_0$ on the horizontal direction. 
The other $n-1$ vertices could be projected onto $x_0$ or not, each with a probability $0.5$.
If $a$ number of vertices are projected to points that are different from $x_0$, then the connected component's diameter will range from 1 to $a$.
The probability of this event occurring can be computed from
\[
p_a (n) = C_{n-1}^{a}  \left( \frac{1}{2} \right)^{n-1}, 
\]
when $C_{n-1}^{a}$ refers to the combination of $n-1$ vertices taken $a$ at a time without repetition.
Out of $n-1$ vertices, $n-1-a$ vertices are projected to $x_0$ with the probability of $\left( \frac{1}{2} \right)^{n-1-a}$, and $a$ vertices are projected to other points with the probability of $\left( \frac{1}{2} \right)^{a}$. 

Because the diameter of $\mathcal{C}_n(\mathcal{L})$ is $k$, we have $k \leq a \leq n$.
If $k \leq a$, there must be $k$ vertices projected onto $k$ different points (not $x_0$) along the horizontal direction. 
We denote these points as $\{x_1, x_2, \cdots, x_k\}$.
For the other $k-a$ vertices, they must be projected onto the points from $\{x_1, x_2, \cdots, x_k\}$.  
Therefore, the conditional probability of $\mathcal{D}(\mathcal{C}_n(\mathcal{L})) = k$ when $a$ vertices are projected to points other than $x_0$ must satisfy 
\begin{equation*}
\begin{split}
\Pr\{\mathcal{D}(\mathcal{C}_n(\mathcal{L})) = k | & a \mbox{ vertices are projected to points} \\
& \mbox{other than } x_0 \} \leq \left( \frac{1}{k} \right)^{a-k}.
\end{split}
\end{equation*}
Note that $\left( \frac{1}{k} \right)^{a-k}$ provides the upper bound because it is possible that the $a-k$ vertices are not connected to the other $k$ vertices in the lattice. Therefore, we have
\begin{equation}
\label{eq:pk}
\begin{split}
p_k (n) &= \sum_{a=k}^{n-1} p_a(n)  \Pr\{\mathcal{D}(\mathcal{C}_n(\mathcal{L})) = k | a \mbox{ vertices are} \\
& \mbox{ projected to points other than } x_0  \}  \\
    &\leq \sum_{a=k}^{n-1} C_{n-1}^{a}  \left( \frac{1}{2} \right)^{n-1} \left( \frac{1}{k} \right)^{a-k}.
\end{split}
\end{equation}
\end{proof}

Based on Lemma~\ref{lm:pk}, the expected diameter of an $n$-size connected component satisfies
\begin{equation}
\label{eq:diameterupper}
\begin{split}
E(\mathcal{D}(\mathcal{C}_n(\mathcal{L}))) & = \sum_{k=1}^{n} k p_k(n) \\
 &\leq \sum_{k=1}^{n-1} k\sum_{a=k}^{n-1} C_{n-1}^{a}  \left( \frac{1}{2} \right)^{n-1} \left( \frac{1}{k} \right)^{a-k}
\end{split}
\end{equation}
\section{Lower and Upper Bounds of Relative Network Delay}
\label{sec:bounds}

\subsection{Lower Bound}
With the upper bounds of $p_n$ and $E(\mathcal{D}(\mathcal{C}_n(\mathcal{L})))$, we are able to compute the lower bound of the relative network delay $\gamma(\lambda)$. Before proceeding, we will introduce the following lemma.

\begin{lemma}
\label{lm:diameter}
Given the RCM $G(\lambda, r_0, g)$ with $\lambda_L < \lambda < \lambda_I$, $\gamma(\lambda)$ satisfies
\begin{equation}
\gamma(\lambda) \geq \frac{1}{E(\mathcal{D}_g(\lambda) + 1) r_0}
\end{equation}
where $\mathcal{D}_g(\lambda)$ is the diameter of a connected component in the lattice.
\end{lemma}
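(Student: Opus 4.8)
The plan is to lower-bound the first passage time $T_\lambda(u,v)$ by the number of distinct clusters a packet must pass through on its way from $u$ to $v$, and then to convert that count into a bound in terms of $d(u,v)$ by using the horizontal extent of each cluster.

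First I would perform the geometric reduction of Section~\ref{sec:ccsize}, but with the coordinate axes rotated so that the segment $\overline{uv}$ becomes the new horizontal axis, and then overlay the square lattice $\mathcal{L}$. Since $\lambda > \lambda_L$, the nodes $u$ and $v$ lie in the long-term giant component, so a path of finite delay exists; fix such a path $\pi$ and list, in order, the clusters $\mathcal{C}_1,\dots,\mathcal{C}_N$ it visits. Since $\lambda < \lambda_I$, the instantaneous graph is not percolated, so every cluster is almost surely finite and, by Lemma~\ref{lm:lattice}, is contained in the area $\mathcal{A}(\mathcal{C}_i(\mathcal{L}))$ of its corresponding connected component. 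That area has horizontal width at most $\bigl(\mathcal{D}(\mathcal{C}_i(\mathcal{L}))+1\bigr)r_0$ — the $\mathcal{D}(\mathcal{C}_i(\mathcal{L}))$ unit edges along the horizontal direction, plus a half-circle of radius $r_0/2$ at each end. Because the horizontal projections of $\mathcal{C}_1,\dots,\mathcal{C}_N$ (together with the short inter-cluster hops, each of length $\le r_0$) must jointly cover the interval between $u$ and $v$, we obtain
\begin{equation*}
\sum_{i=1}^{N}\bigl(\mathcal{D}(\mathcal{C}_i(\mathcal{L}))+1\bigr)r_0 \;\ge\; d(u,v).
\end{equation*}

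Next I would bound the delay from below by the number of cluster transitions. Two consecutive clusters on $\pi$ are by definition not instantaneously connected at the moment the packet sits in $\mathcal{C}_i$, so the packet cannot leave $\mathcal{C}_i$ within the current slot and must wait at least one more time slot per transition; hence $T(\pi)\ge N-1$, and the same holds for the time-minimizing path, so $T_\lambda(u,v)\ge N-1$. Dividing the displayed inequality by $N$ and letting $d(u,v)\to\infty$ along the time-minimizing path, the empirical mean $\frac1N\sum_{i=1}^N\bigl(\mathcal{D}(\mathcal{C}_i(\mathcal{L}))+1\bigr)$ converges to $E(\mathcal{D}_g(\lambda)+1)$ (the clusters met along the path form a stationary ergodic sequence under the Poisson model, the same structural fact that makes the limit in Definition~\ref{DelayDef} well defined), giving $N \gtrsim d(u,v)/\bigl(E(\mathcal{D}_g(\lambda)+1)r_0\bigr)$. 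Therefore
\begin{equation*}
\gamma(\lambda)=\lim_{d(u,v)\to\infty}\frac{T_\lambda(u,v)}{d(u,v)} \ge \lim_{d(u,v)\to\infty}\frac{N-1}{d(u,v)} \ge \frac{1}{E(\mathcal{D}_g(\lambda)+1)r_0},
\end{equation*}
which is the claim.

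The step I expect to be the main obstacle is the convergence of the per-cluster average $\frac1N\sum_i\bigl(\mathcal{D}(\mathcal{C}_i(\mathcal{L}))+1\bigr)$ to $E(\mathcal{D}_g(\lambda)+1)$: the time-minimizing path is chosen adaptively, so the clusters it visits are not a priori i.i.d., and one needs either a subadditive-ergodic argument (in the spirit of the first-passage-percolation theory behind Definition~\ref{DelayDef}) or a renewal-type reduction that replaces the adaptive path by a deterministic sweep of the strip between $u$ and $v$; one also has to pin down which law $\mathcal{D}_g(\lambda)$ carries (a plain versus a size-biased connected component). Two lesser points to handle carefully are the bookkeeping of the inter-cluster hops in the covering inequality, so that the constant stays $+1$ rather than $+2$, and the ``one slot per transition'' bound, which must be argued from the instantaneous graph — consecutive clusters really are disconnected at the relevant instant — rather than from the marginal law of a single $T(e)$, since an individual link can have zero waiting delay.
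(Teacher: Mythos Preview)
Your proposal is correct and follows essentially the same route as the paper: rotate so that $\overline{uv}$ is the horizontal axis, list the clusters $(u_k,t_k)$ the packet traverses, lower-bound the delay by the number of transitions $M-1$, and upper-bound $d(u,v)$ by the sum of the clusters' horizontal extents plus one hop each. Two small points of alignment: the paper obtains the ``$+1$'' from the inter-cluster hop $d_x(u_k',u_{k+1})\le r_0$ (not from the half-circles), bounding the cluster's horizontal size $S_{x,g}(\lambda)$ first and only then invoking Lemma~\ref{lm:lattice} to pass to $\mathcal{D}_g(\lambda)\,r_0$; and your flagged obstacle on the convergence of $\tfrac{1}{M}\sum_k(S_{x,g,t_k,u_k}(\lambda)+r_0)$ is exactly where the paper punts to an external result (it cites ``Lemma~5 in~\cite{wangton}'' and explicitly notes that the strong law of large numbers does not apply because the summands need not be independent).
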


This lemma is from~\cite{Wang11} with subtle difference where the lower bound is changed from $\frac{1}{E(S_g(\lambda) + r_0)}$ to $ \frac{1}{E(\mathcal{D}_g(\lambda) + r_0)}$ where $S_g(\lambda)$ represents the cluster size in~\cite{Wang11} while $\mathcal{D}_g(\lambda)$ is the connected component diameter in this paper.

\begin{proof}
Given two nodes $u$ and $v$, we can find a series of clusters connecting $u$ to $v$ 
\[ \{ (u_1, t_1), (u_2, t_2), \cdots, (u_M, t_M) \},\]
where $(u_k, t_k)$ indicates the cluster originating from node $u_k$ at time slot $t_k$. 
Therefore, we have $u_1 = u$, $u_M$ and $v$ are in the same cluster. 
The delay of transmitting a packet via these clusters is
\[ \sum_{k=1}^{M-1}(t_{k+1} - t_{k}) \geq M-1, \]
because clusters $(u_k, t_k)$ and $(u_{k+1}, t_{k+1})$ appears in two different time slots, \ie, $t_{k+1} - t_k \geq 1$.

At a certain time slot $t_k$, we assume $u_k$ is connected (by a multi-hop path) to a node $u_k'$ which is later connected to $u_{k+1}$ at time $t_{k+1}$. Taking node $u$ as the origin, we draw a line from $u$ to $v$ and consider this line the horizontal axis of the new coordinate system. Then, we have
\[
\begin{split} 
d(u_k, u_{k+1}) & \leq d(u_k, u'_k) + d(u'_k, u_{k+1})  \\
& \leq S_{g,t_k, u_k}(\lambda) + r_0,
\end{split}
\]
where $S_{g,t_k, u_k}(\lambda)$ denotes the cluster size of $(u_k, t_k)$. Considering only the horizontal distances among $u_k$, $u'_k$, and $u_{k+1}$, we have
\[
\begin{split} 
d_x(u_k, u_{k+1}) & \leq d_x(u_k, {u'}_k) + d_x({u'}_k, u_{k+1}) \\
& \leq S_{x,g,t_k, u_k}(\lambda) + r_0,
\end{split}
\]
where $d_x(\cdot)$ measure the horizontal distance between two nodes. $S_{x,g,t_k, u_k}(\lambda)$ is the diameter of cluster $(u_k, t_k)$, \ie, the largest horizontal distance between nodes in $(u_k, t_k)$. Therefore, we have
\[
\begin{split}
d_x(u, v) & \leq \sum_{k=1}^{M-1} (S_{x, g,t_k, u_k}(\lambda) + r_0) + S_{x, g,t_M, u_M}(\lambda) \\
& < \sum_{k=1}^{M}(S_{x, g,t_k, u_k}(\lambda) + r_0).
\end{split}
\]

For each $k$, $S_{x,g,t_k, u_k}(\lambda)$ admits the same distribution, so we rewrite it as $S_{x,g}(\lambda)$.
According to the Lemma 5 in~\cite{wangton}, we have
\[
\lim_{M\rightarrow \infty} \frac{\sum_{k=1}^{M}(S_{x, g,t_k, u_k}(\lambda) + r_0)}{M} = E(S_{x, g}(\lambda) + r_0).
\]
Note that this result cannot be derived from the strong law of large number because $S_{x,g,t_k, u_k}(\lambda)$ may not be independent for different $k$'s~\cite{wangton}.

For a small number $\epsilon > 0$, $\exists M_1$ such that $\forall M > M_1$, we have
\[
\frac{\sum_{k=1}^{M}(S_{x, g,t_k, u_k}(\lambda) + r_0)}{M} = E(S_{x,g}(\lambda) + r_0) \pm \epsilon.
\]
Therefore,
\[
d_x(u, v) < \sum_{k=1}^{M} (S_{x,g}(\lambda) + r_0 ) < M(E(S_{x,g}(\lambda)+r_0) \pm \epsilon).
\]

Because $\mathcal{D}_g(\lambda)$ is the diameter of a connected component in the lattice, we have $\mathcal{D}_g(\lambda) r_0 \geq S_{x, g}(\lambda)$, due to Lemma~\ref{lm:lattice}. Then,
\[
\begin{split}
M-1 & > \frac{d_x(u, v)}{E(S_{x,g}(\lambda) + r_0) \pm \epsilon} -1 \\
&\geq \frac{d_x(u, v)}{E(\mathcal{D}_g(\lambda) + 1)r_0 \pm \epsilon} -1.
\end{split}
\]

We are interested in the relative network delay, i.e., when $d(u, v) \rightarrow \infty$; therefore, we assume $u$ is far away from $v$ and we have
\[
T_{\lambda}(u, v) \geq M-1 > \frac{d_x(u, v)}{E(\mathcal{D}_g(\lambda) + 1)r_0 \pm \epsilon} -1.
\]
Therefore, when $\epsilon \rightarrow 0$
\[
\begin{split}
\gamma(\lambda) & = \lim_{d(u, v) \rightarrow \infty} \frac{T_{\lambda}(u, v)}{d(u, v)}\\
& = \lim_{d_x(u, v) \rightarrow \infty} \frac{T_{\lambda}(u, v)}{d_x(u, v)} \\
& \geq \frac{1}{E(\mathcal{D}_g(\lambda) + 1)r_0}.
\end{split}
\]
\end{proof}

Due to the above lemma, we are able to identify the lower bound of $\gamma(\lambda)$ in the following theorem.
\begin{theorem}
Given the RCM $G(\lambda, r_0, g)$ with $\lambda_L < \lambda < \lambda_I$, $\gamma(\lambda)$ satisfies
\begin{equation}
\begin{split}
& \gamma(\lambda) \geq \\
& \frac{1}{\left(  \sum_{n=2}^{\infty} \bar p_n \sum_{k=1}^{n-1} k\sum_{a=k}^{n-1} C_{n-1}^{a}  \left( \frac{1}{2} \right)^{n-1} \left( \frac{1}{k} \right)^{a-k} +1 \right) r_0}, \\
\end{split}
\end{equation}
where $\bar p_n$ is the upper bound of the probability that a connected component's size is $n$.
\end{theorem}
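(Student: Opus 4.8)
The plan is to plug the two bounds established in Sections~\ref{sec:ccsize} and~\ref{sec:ccdiameter} into Lemma~\ref{lm:diameter}, which has already carried out the first-passage argument and reduced the claim to an explicit upper bound on $E(\mathcal{D}_g(\lambda))$, the expected horizontal diameter of the connected component associated with a cluster. Since Lemma~\ref{lm:diameter} gives $\gamma(\lambda)\ge \frac{1}{(E(\mathcal{D}_g(\lambda))+1)r_0}$ and $x\mapsto \frac{1}{(x+1)r_0}$ is decreasing in $x$, it suffices to show that $E(\mathcal{D}_g(\lambda))$ is bounded above by the triple sum in the statement; substituting that bound then reproduces the theorem verbatim.

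First I would decompose $E(\mathcal{D}_g(\lambda))$ by conditioning on the size $\mathcal{S}$ of the connected component via the law of total expectation: $E(\mathcal{D}_g(\lambda))=\sum_{n\ge 1}\Pr\{\mathcal{S}=n\}\,E(\mathcal{D}(\mathcal{C}_n(\mathcal{L})))$. The $n=1$ term drops out because a single-vertex component has diameter $0$, so the sum effectively starts at $n=2$. Next I would substitute the per-size bound $E(\mathcal{D}(\mathcal{C}_n(\mathcal{L})))\le \sum_{k=1}^{n-1}k\sum_{a=k}^{n-1}C_{n-1}^{a}(\tfrac12)^{n-1}(\tfrac1k)^{a-k}$ from \eqref{eq:diameterupper} (a consequence of Lemma~\ref{lm:pk}) together with the size-tail bound $\Pr\{\mathcal{S}=n\}\le\bar p_n$ established earlier in Section~\ref{sec:ccsize}. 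Since every factor of every summand is nonnegative, replacing both factors by their respective upper bounds only enlarges the sum, giving $E(\mathcal{D}_g(\lambda))\le \sum_{n=2}^{\infty}\bar p_n\sum_{k=1}^{n-1}k\sum_{a=k}^{n-1}C_{n-1}^{a}(\tfrac12)^{n-1}(\tfrac1k)^{a-k}$, and the result follows by the monotonicity noted above.

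The main obstacle is justifying the conditioning step rigorously: the random variable $\mathcal{D}_g(\lambda)$ in Lemma~\ref{lm:diameter} is the diameter of the component \emph{containing} the cluster a packet passes through, so its size law is a priori size-biased and need not coincide with the ``typical component'' law that $\{\bar p_n\}$ dominates; one must argue either that this component's size is still dominated termwise by $\{\bar p_n\}$, or that the coupling of Lemma~\ref{lm:lattice} already produces a worst-case (largest) component so that using $\bar p_n$ remains an over-estimate. A secondary, purely quantitative point is that the bound is informative only when the triple sum converges: one should record that the product $\bar p_n=p\prod_{k=3}^{n}\frac{2kp}{2kp+p+1}$ decays in $n$ and, for the $p$ determined by the regime $\lambda_L<\lambda<\lambda_I$ through \eqref{eq:p}, compare this decay against the polynomial growth of the inner diameter sum; whenever the series fails to converge the right-hand side degenerates to $0$ and the inequality still holds trivially.
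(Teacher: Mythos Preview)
Your proposal is correct and follows essentially the same route as the paper: condition $E(\mathcal{D}_g(\lambda))$ on the component size, drop the $n=1$ term, replace $\Pr\{\mathcal{S}=n\}$ by $\bar p_n$ and $E(\mathcal{D}(\mathcal{C}_n(\mathcal{L})))$ by the bound~\eqref{eq:diameterupper}, and then invoke Lemma~\ref{lm:diameter} together with the monotonicity of $x\mapsto\frac{1}{(x+1)r_0}$. The two caveats you flag (size-biasing of the component containing a given cluster, and convergence of the triple sum) are not addressed in the paper's own proof either, so your argument is at least as complete as the original.
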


\begin{proof}
The expected connected component's diameter in $\mathcal{L}$ satisfies
\begin{equation*}
E(\mathcal{D}_g(\lambda)) \leq  \sum_{n=1}^{\infty} E(\mathcal{D}(\mathcal{C}_n(\mathcal{L})))  \bar p_n,
\end{equation*}
Based on the inequation~\ref{eq:diameterupper}, we get
\begin{equation*}
\begin{split}
E(\mathcal{D}_g(\lambda)) \leq & \sum_{n=2}^{\infty} \bar p_n \sum_{k=1}^{n-1} k\sum_{a=k}^{n-1} C_{n-1}^{a}  \left( \frac{1}{2} \right)^{n-1} \left( \frac{1}{k} \right)^{a-k}\\
&+E(\mathcal{D}(\mathcal{C}_1(\mathcal{L}))) \bar p_1.
\end{split}
\end{equation*}
Because the diameter of a connected component with one vertex is always zero, we have
\begin{equation}
\label{eq:expdiameter}
E(\mathcal{D}_g(\lambda)) \leq \sum_{n=2}^{\infty} \bar p_n \sum_{k=1}^{n-1} k\sum_{a=k}^{n-1} C_{n-1}^{a}  \left( \frac{1}{2} \right)^{n-1} \left( \frac{1}{k} \right)^{a-k}.
\end{equation}
Due to Lemma~\ref{lm:diameter}, we can get the lower bound of $\gamma(\lambda)$ from the following inequation.
\begin{equation*}
\begin{split}
& \gamma(\lambda) \geq \\
& \frac{1}{\left(  \sum_{n=2}^{\infty} \bar p_n \sum_{k=1}^{n-1} k\sum_{a=k}^{n-1} C_{n-1}^{a}  \left( \frac{1}{2} \right)^{n-1} \left( \frac{1}{k} \right)^{a-k} + 1 \right) r_0}, \\ 
\end{split}
\end{equation*}
where $\bar p_n$ is the upper bound of $\Pr\{\mathcal{S}=n\}$.
\end{proof}

Note that we find a closed-form expression of the lower bound of $\gamma(\lambda)$.
It is different from the approximated one, \ie, $\gamma(\lambda) \geq \frac{1}{E(S_g(\lambda)) + r_0}$ in~\cite{Wang11} where $E(S_g(\lambda))$ is approximated as $\frac{1.2841 \lambda}{2.4886 - \lambda}$.

\subsection{Upper Bound}
To compute the upper bound of $\gamma(\lambda)$ in an MEHWN, we first construct a sparse network from $G(\lambda, r_0, g)$ by randomly removing $(1-\lambda_L / \lambda)$ portion of nodes. 
According to the Thinning Theorem\cite{thinnigThm}, the resulting graph consisting of all remaining nodes along with their associated links can still be modeled by the RCM. 
Then, we compute the $\gamma(\lambda_L)$ of any two nodes in the sparse network. 
Finally, we calculate the upper bound of $\gamma(\lambda)$ in the original network.

\begin{theorem}
\label{thm1}
Given the RCM $G(\lambda, r_0, g)$ with $\lambda_L < \lambda < \lambda_I$, $\gamma(\lambda)$ satisfies
\begin{equation}
\gamma(\lambda) \leq \gamma (\lambda_L).
\end{equation}
\end{theorem}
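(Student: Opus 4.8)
\emph{Proof proposal.} The plan is to establish that the relative network delay is non-increasing in the node density $\lambda$, via a coupling argument founded on the Thinning Theorem. The guiding intuition is that a denser deployment offers at least as many routes between any two nodes as a sparser one, so its first passage time can only be smaller, and therefore so is the delay-to-distance ratio of Definition~\ref{DelayDef}.

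First I would realize $G(\lambda_L, r_0, g)$ as a sub-network of $G(\lambda, r_0, g)$. Starting from the Poisson point process of intensity $\lambda$ underlying the original network, independently retain each node with probability $\lambda_L/\lambda$ and discard it otherwise; by the Thinning Theorem the retained points form a Poisson point process of intensity $\lambda_L$. Keeping, for every retained pair of nodes within distance $r_0$, the same Bernoulli$(g)$ link indicator and the same per-link waiting-delay variable $T(e)$ as in the original network, the thinned graph is distributed exactly as $G(\lambda_L, r_0, g)$, while its vertex set, edge set, and edge-delay assignment are all restrictions of those of $G(\lambda, r_0, g)$.

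Next, since $\lambda_L < \lambda$, the thinned network percolates and its (long-term) giant component is contained in $\mathcal{C}(G(\lambda, r_0, g))$ by uniqueness of the infinite cluster. Fix two nodes $u, v$ in the giant component of the thinned network; they then lie in $\mathcal{C}(G(\lambda, r_0, g))$ as well. Every path $\pi$ from $u$ to $v$ in the thinned network is also a path in the original network with the same crossing time $T(\pi)$, so the infimum defining $T_\lambda(u,v)$ ranges over a superset of the paths defining $T_{\lambda_L}(u,v)$, giving the pointwise inequality $T_\lambda(u,v) \le T_{\lambda_L}(u,v)$. Dividing by $d(u,v)$ and letting $d(u,v) \to \infty$ along this common pair of endpoints yields $\gamma(\lambda) \le \gamma(\lambda_L)$, using that both limits exist and are almost surely constant, independent of the chosen endpoints in the respective giant components.

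The step I expect to be the main obstacle is the bookkeeping around the limiting quantity $\gamma$: because it is defined as a limit as $d(u,v) \to \infty$, I must argue that one may use the \emph{same} sequence of endpoint pairs simultaneously in the thinned and the original network, that these pairs remain in the appropriate giant components, and that the almost-sure constancy of $\gamma(\lambda)$ (via the subadditive-ergodic machinery cited for Definition~\ref{DelayDef}) transfers the pointwise bound $T_\lambda \le T_{\lambda_L}$ to the two limits. The coupling itself is routine once the Thinning Theorem is invoked; the care is concentrated entirely in making this ``same $u, v$ in both networks'' reduction rigorous.
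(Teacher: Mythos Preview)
Your proposal is correct and shares the paper's core mechanism: couple $G(\lambda_L, r_0, g)$ inside $G(\lambda, r_0, g)$ via independent thinning, then use that every $u$--$v$ path in the sparser network is also available (with the same edge delays) in the denser one to get the pointwise inequality $T_\lambda(u,v)\le T_{\lambda_L}(u,v)$.

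The difference lies exactly where you anticipated, in the endpoint bookkeeping. You choose $u,v$ directly in the thinned giant component and appeal to the almost-sure constancy of $\gamma(\cdot)$ (independence of the choice of endpoints) to transfer the pointwise bound to the two limits. The paper instead allows $u,v$ to be arbitrary nodes of $\mathcal{C}(G(\lambda,r_0,g))$, possibly outside the thinned giant component, and finds nearest ``connections'' $w_1,w_2\in\mathcal{C}(G'(\lambda_L,r_0,g))$; two auxiliary lemmas (finite Euclidean distance to the nearest giant-component node, and finite expected hop count over finite distance in a percolated network) then show that the detours $u\to w_1$ and $v\to w_2$ contribute $o(d(u,v))$ and vanish in the limit, after which the same path-inclusion argument gives $\gamma(\lambda)\le\gamma(\lambda_L)$. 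Your route is shorter and avoids those two lemmas, at the cost of leaning on the subadditive-ergodic constancy of $\gamma$; the paper's route is more self-contained about why the particular choice of endpoints does not matter.
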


The proof of this theorem could be found in the Appendix. The above upper bound of $\gamma(\lambda)$ can be expressed as
\begin{equation}
\label{eq:myupper}
\gamma (\lambda_L) = \lim_{d(w_1,w_2) \to \infty } \frac{T_{\lambda_L}(w_1, w_2) E[T(e)] }{d(w_1, w_2)} = \kappa  E[T(e)],
\end{equation}
where $\kappa = \displaystyle \lim_{d(w_1,w_2) \to \infty } \frac{T_{\lambda_L}(w_1, w_2) }{d(w_1, w_2)}$.
The existence of $\kappa$ when node density is $\lambda_L$ is proved in~\cite{Wang11}. Because the upper bound of $\gamma(\lambda)$ identified in~\cite{Wang11} is
\[
\gamma(\lambda)\leq \kappa \sqrt{\frac{\lambda}{\lambda_L}} E[T(e)],
\]
and $\kappa  E[T(e)] \leq \kappa \sqrt{\frac{\lambda}{\lambda_L}} E[T(e)] $, we find a tighter upper bound of $\gamma(\lambda)$.


\section{Simulation Results}
\label{sec:simulation}

To conduct a fair comparison with the start-of-art results presented in~\cite{Wang11}, we choose the same simulation parameters used in~\cite{Wang11}.
Specifically, we simulate networks within a $20 \times 20$ square area using MATLAB. 
Nodes are deployed in the area according to the Poisson Point Process, with different $\lambda$'s ranging from 1.4 to 2.8.
Communication range $r_0$ is set to be $1$. 
Different parameters, such as larger network areas and higher node densities, can be used to conduct simulations. 
Because we observed similar results, they are not presented here.

\begin{figure*}[!h]
        \centering  
        \subfigure[]{%
		\includegraphics[width=1.6in]{./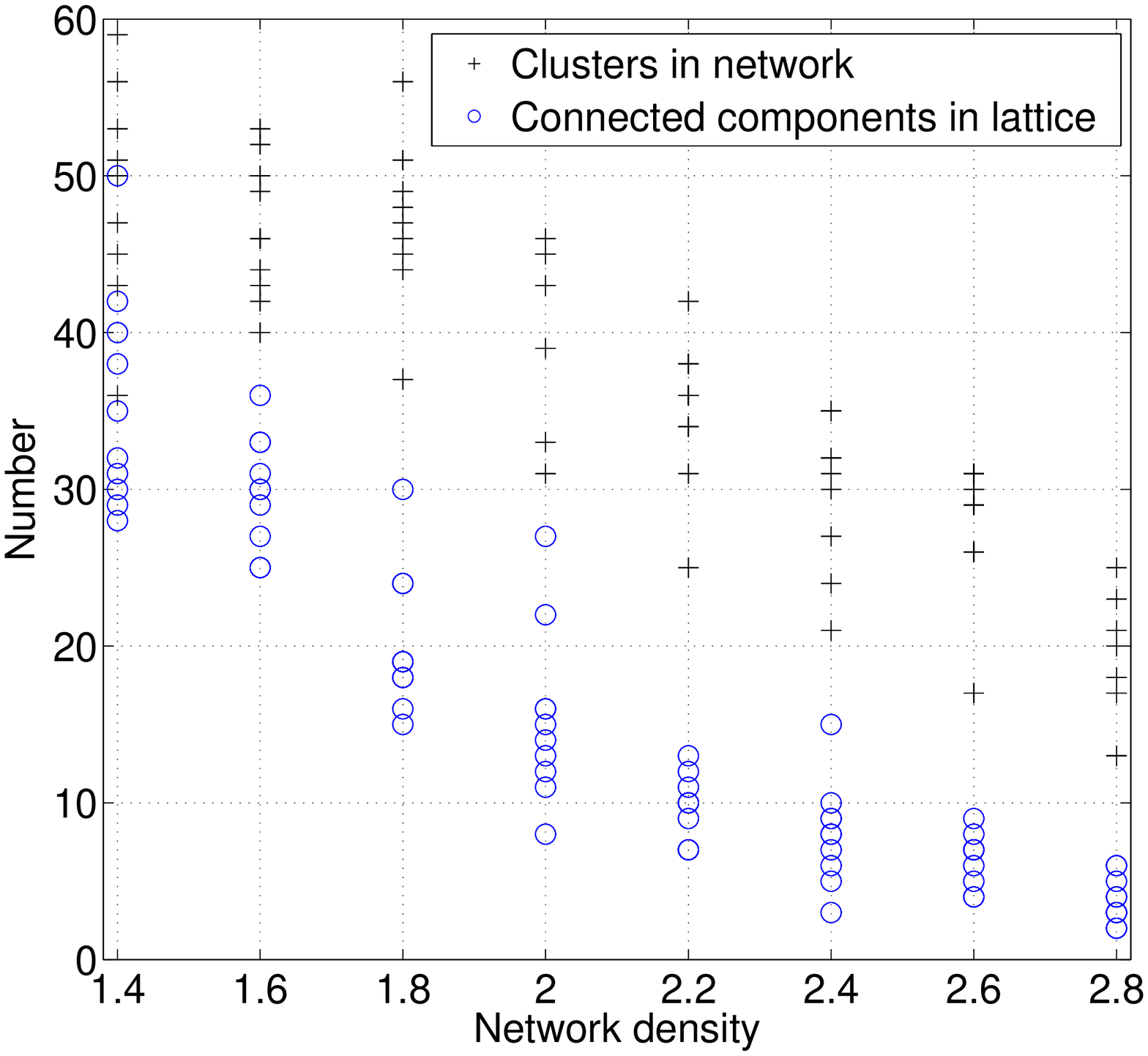}%
		\label{fig:clusternumber}%
		}
 		\subfigure[]{%
		\includegraphics[width=1.6in]{./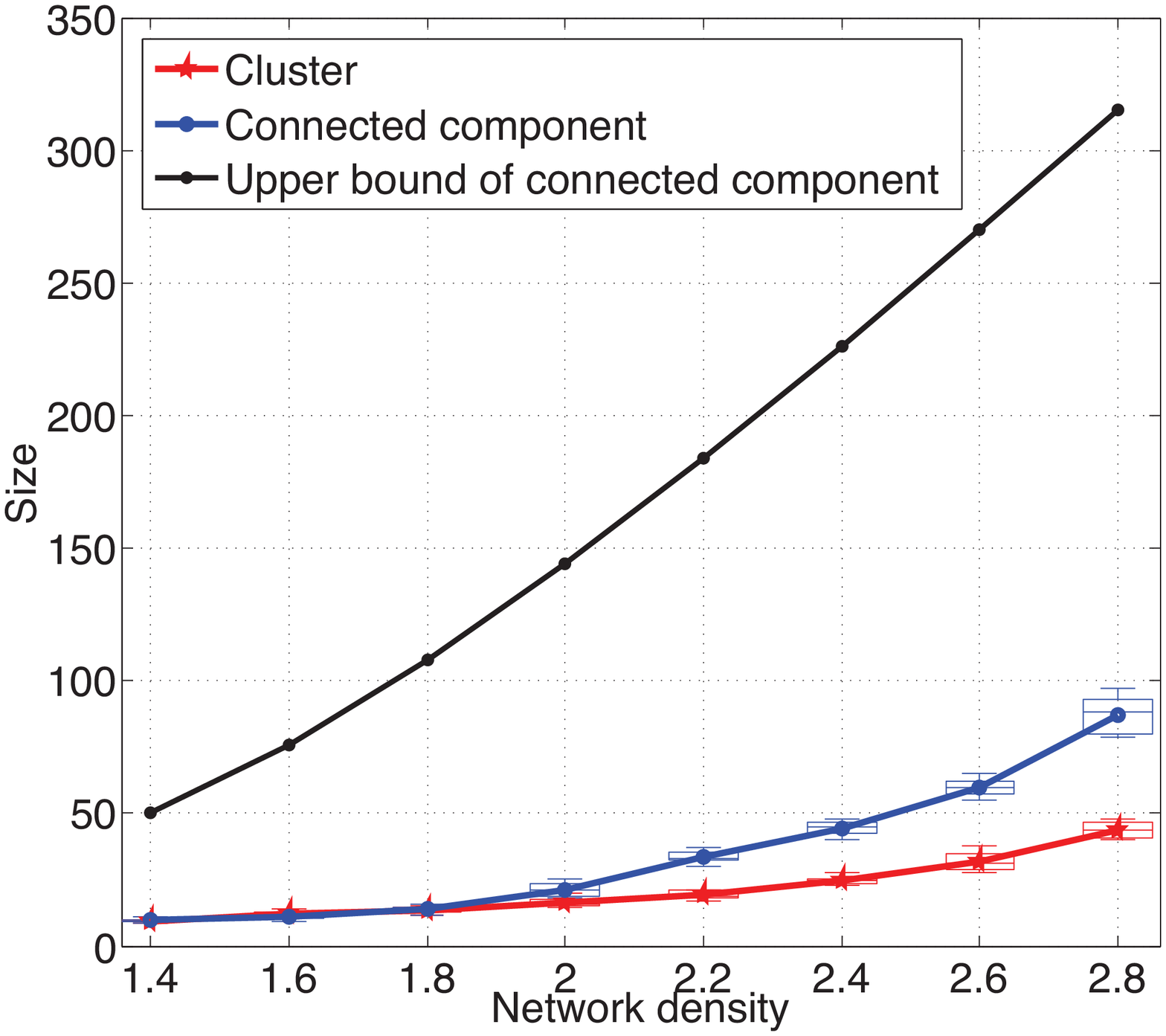}%
		\label{fig:size}%
		}
        \subfigure[]{%
		\includegraphics[width=1.6in]{./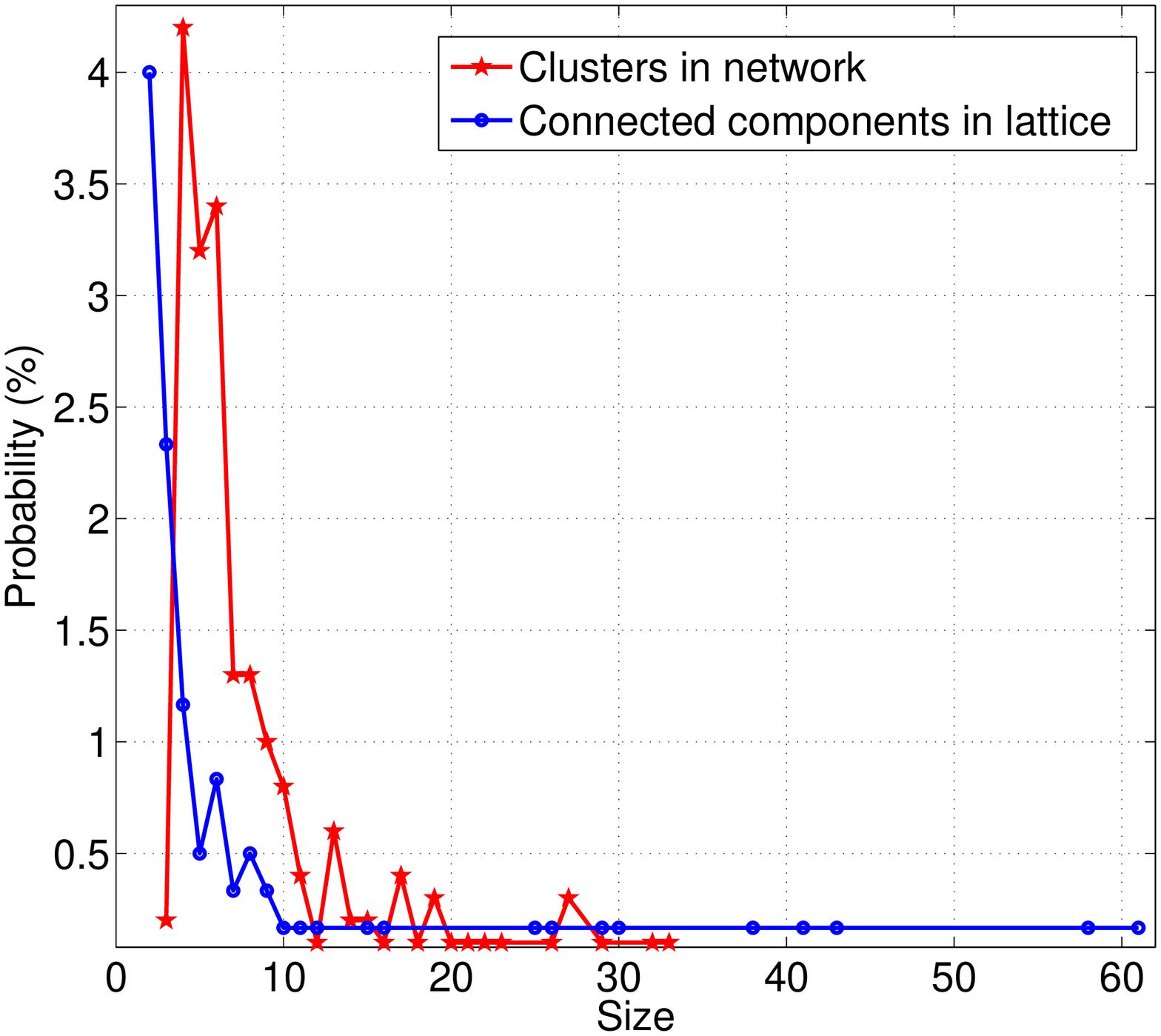}%
		\label{fig:distribution:a}%
		}
		\subfigure[]{%
		\includegraphics[width=1.6in]{./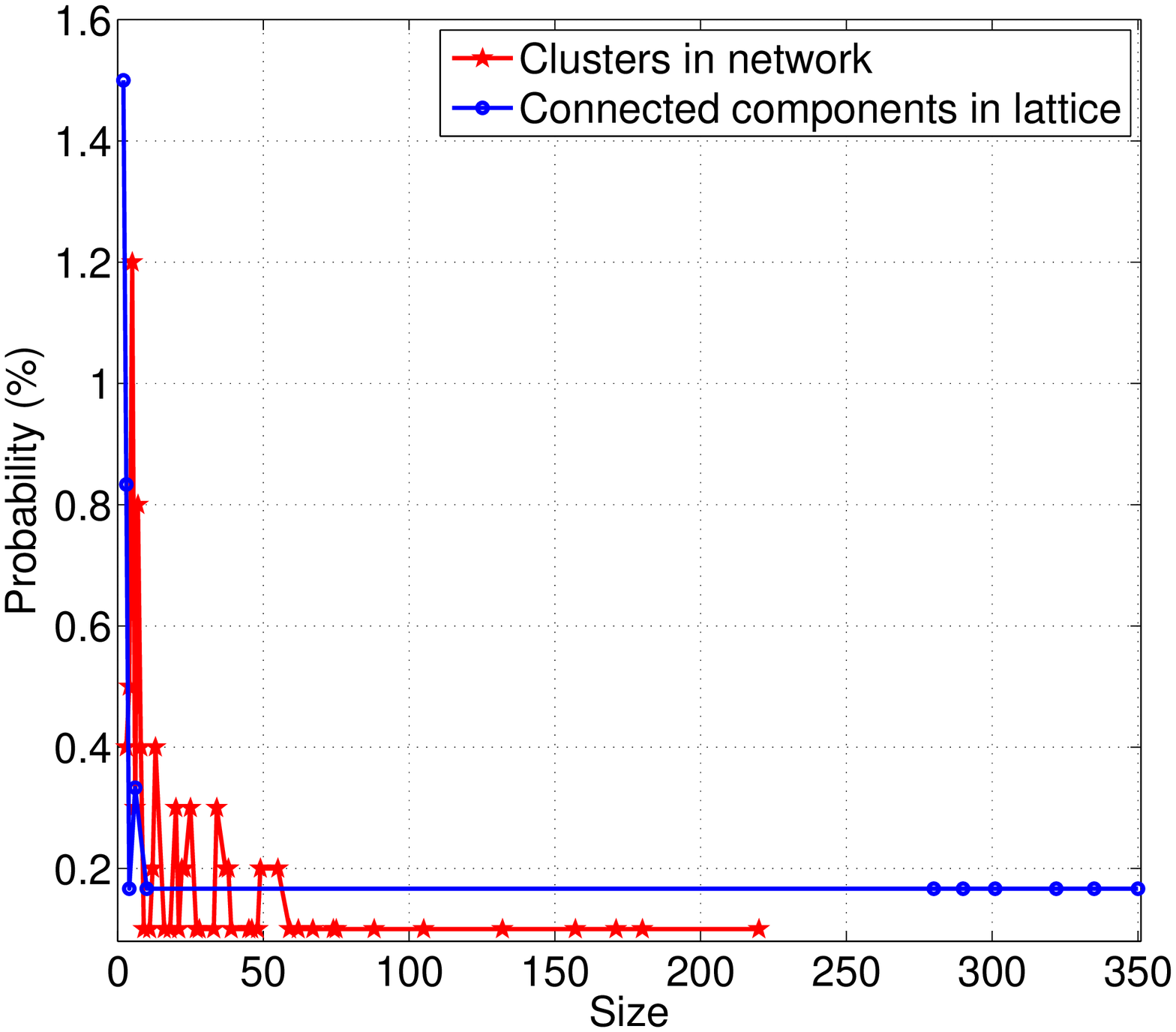}%
		\label{fig:distribution:b}%
		}
        \caption{(a) Number of clusters and number of connected components with various $\lambda$. (b) Cluster sizes and connected component sizes with various $\lambda$'s. (c) PDF of connected component size and cluster size with $\lambda=1.4$. (d) PDF of connected component size and cluster size with $\lambda = 2.8$.}\label{fig:distribution}
\end{figure*}

In simulations, time is broken up into time slots.
In each time slot, a node becomes active with the probability of $q = \sqrt{g}$. 
Two nodes are connected in a time slot if they are active and their distance is smaller or equal to $r_0$.
In this way, several clusters will be generated in each slot.

In the first slot, the source node sends a message to all nodes in the cluster that it belongs to.
If the destination is in the same cluster, we think that the relative network delay is $0$.
Otherwise, nodes that just received this message will forward it to other nodes in the next slot.
The total number of slots needed to forward the message from the source to the destination is considered the network delay.
Dividing this delay by the distance between the source and the destination, we obtain the relative network delay.
%

%
%
%

\subsection{Connected Components vs Clusters}
Keeping $g=0.25$, we change $\lambda$ from $1.4$ to $2.8$ and plot the numbers of clusters and connected components in Fig.~\ref{fig:clusternumber}.
When $\lambda > 1.4$, there are more clusters and less connected components, however, the numbers of clusters and connected components decrease when $\lambda$ increases. 
That makes sense because when $\lambda$ increases to a certain value, both network and lattice percolate with a high probability. 
If the network/lattice percolates, there will be a giant cluster/connected component in the network, \ie, the number of clusters/connected components approaches to 1.

With the same setting, we plot the sizes of clusters and connected components in Fig.~\ref{fig:size}. 
Here the cluster size is measured as the number of vertices in the corresponding connected component. 
As expected, the sizes of both clusters and connected components increase as $\lambda$ increases. 
In addition, we see a potential exponential increase of the sizes, \ie, when $\lambda$ is large enough, the entire network percolates, resulting a giant component. 
This figure also contains the upper bound values of connected component's size that are computed from inequation~\ref{eq:componentsize}. 
We see the upper bound values are always larger than connected component and cluster sizes, verifying our theoretical results.

We further plot the distributions of cluster sizes and connected component sizes in Fig.~\ref{fig:distribution}. 
We see the distribution of cluster sizes is very similar to that of connected component sizes, which confirms our assumption of using connected component to estimate cluster. 
In addition, we find there are more larger connected components in the lattice. 
This phenomena is expected due to Lemma~\ref{lm:lattice}, \ie, the size of connected component should be greater than the that of cluster. 
%

Fig.~\ref{fig:xdiameter} shows the diameters of connected components and clusters with different node densities. 
We see that connected component's diameter is an good estimation of cluster's diameter. 
There is a slight difference between them when $\lambda < 1.1$, however, the difference increases to 4 when $\lambda = 2.8$. 
This figure also implies that expected connected component's diameter can be used to approximate the expected cluster's diameter. 
This observation supports our goal in identifying a closed-form expression of the lower bound of $\gamma(\lambda)$ by computing the upper bound of expected connected component's diameter.

The upper bound of the expected connected component diameter is computed from inequation~\ref{eq:expdiameter} and plotted in Fig.~\ref{fig:xdiameter}. We see that the computed upper bound values are always larger than the expected diameters of connected components and clusters in the network. 

\subsection{Upper and Lower Bounds}
One critical information in the upper bound equation of $\gamma(\lambda)$ is $\kappa$.
We know $\kappa$ is defined as $\frac{N_{\lambda_L}(d)}{d}$ when $d \rightarrow \infty$. 
$\lambda_L$ is the long-term critical network density, so it is equal to $1.44$. 
To obtain this value, we randomly select two nodes in the network, and consider the number of hops along the shortest path between them as $N_{\lambda_L}(d)$. 
We find that the value of $\kappa$ varies, depending on the distance between these two nodes.
However, $\kappa \approx 1.7$ given large amount of simulations.

To accurately compute the value of $\gamma(\lambda)$, we select four pairs of nodes: (1) nodes with the minimum and maximum x-coordinates, (2) nodes with the minimum and maximum y-coordinates, (3) nodes located at the left-bottom and right-top corners of the area, and (4) two nodes that are randomly selected. 
The first three pairs of nodes are selected due to the definition of $\gamma(\lambda)$, which measures the delay-distance ratio when the distance goes to infinity. 
The last pair is randomly selected to avoid selection bias.
In the simulations, we randomly selected 100 pairs of these nodes, so we tested 103 pairs of nodes in total. 

For each pair of nodes, we repeat the simulation $10$ times and present only the average results in the paper. 
In Fig.~\ref{fig:delay}, we plot the average value of $\gamma(\lambda)$ with $\lambda$ ranging from $1.4$ to $2.8$. 
We see that $\gamma(\lambda)$ decreases quickly at the beginning (small $\lambda$), and then slowly, and finally reaches $0$. 
This phenomena makes sense because when $\lambda$ is large, the network is percolated with a high probability. 
If a network percolates, relative network delay is $0$.
\begin{figure*}[]
\centering
		\subfigure[]{%
		\includegraphics[height=1.6in]{./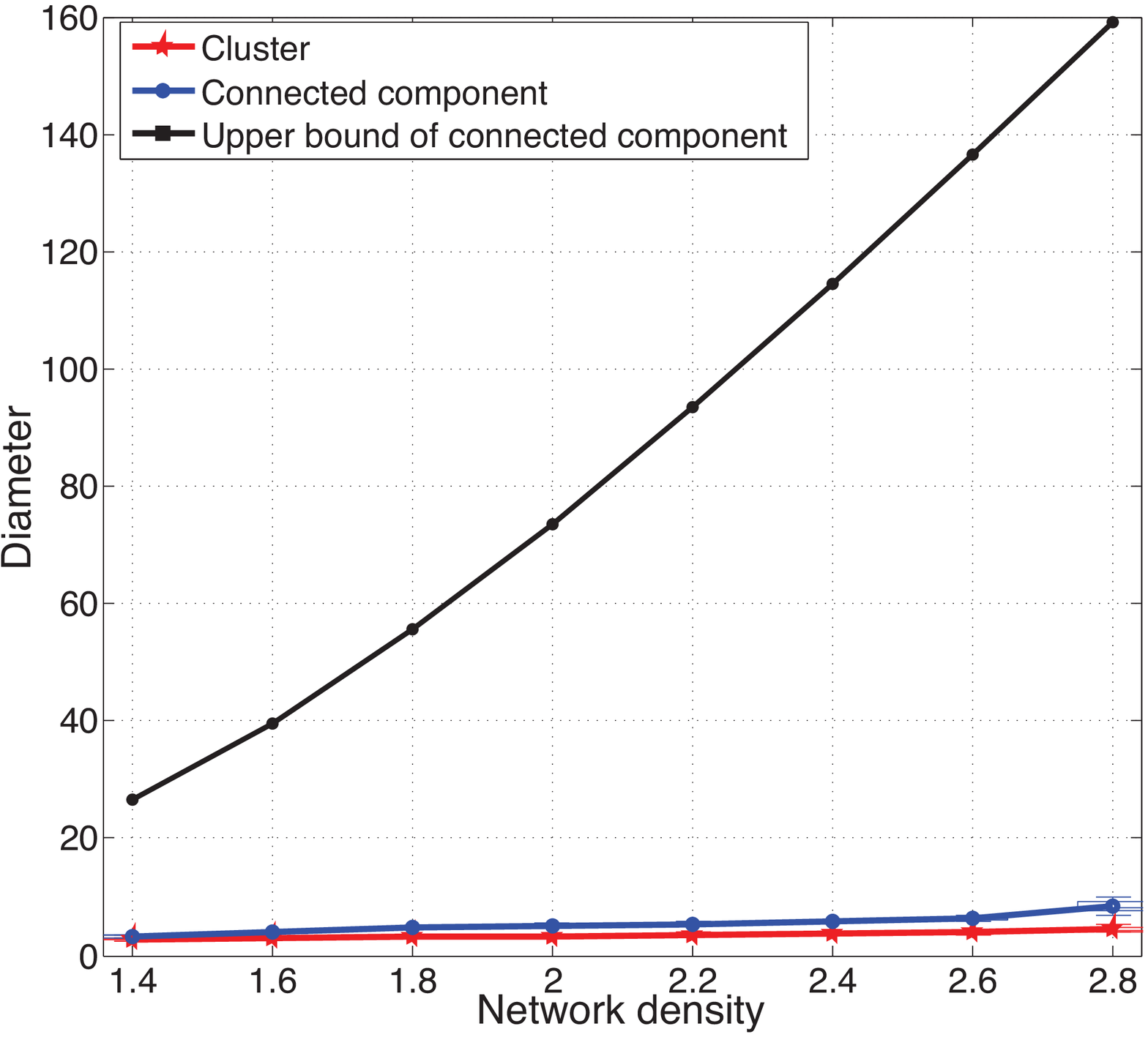}%
		\label{fig:xdiameter}%
		}
        \subfigure[]{%
		\includegraphics[height=1.6in]{./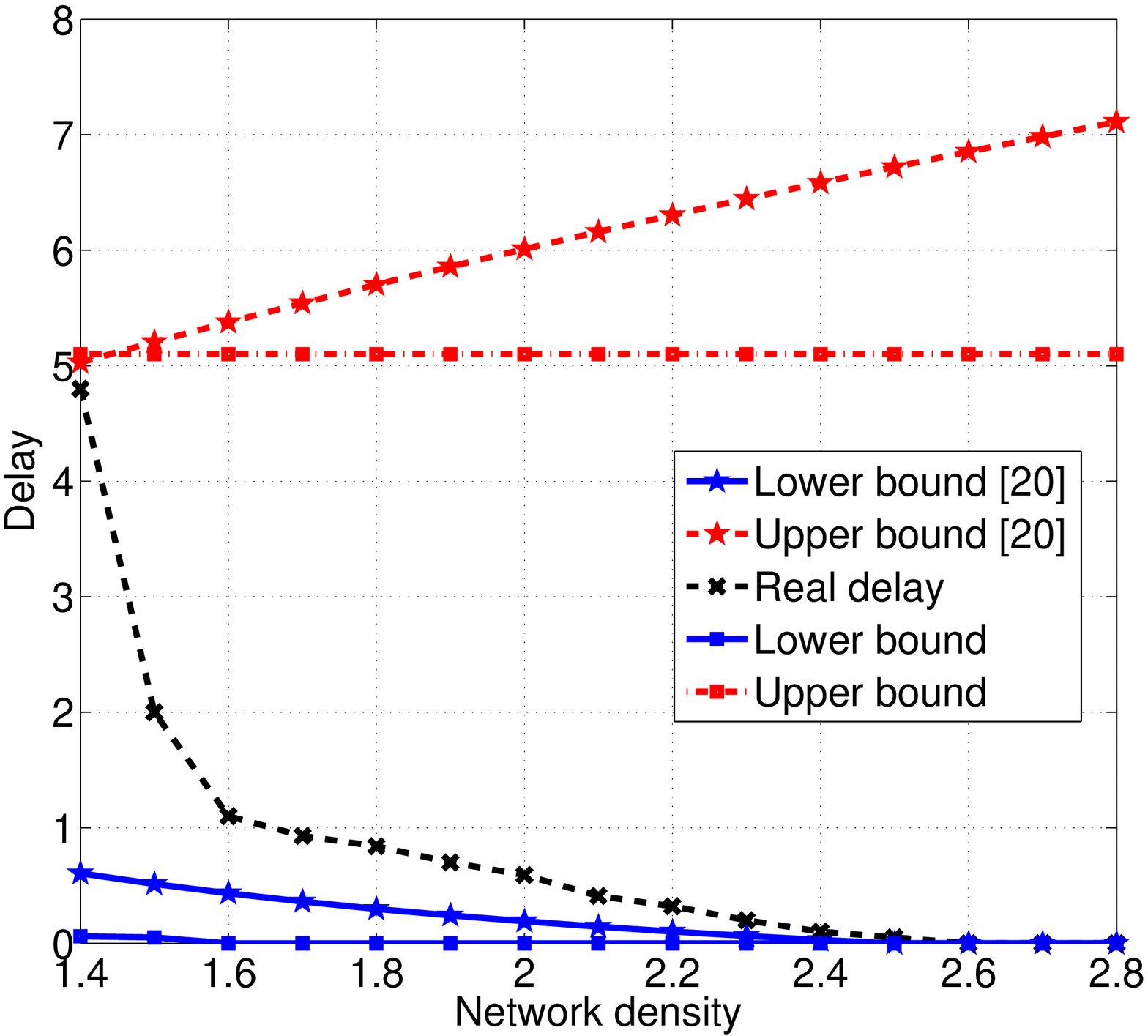}%
		\label{fig:delay}%
		}
		\subfigure[]{%
		\includegraphics[height=1.6in]{./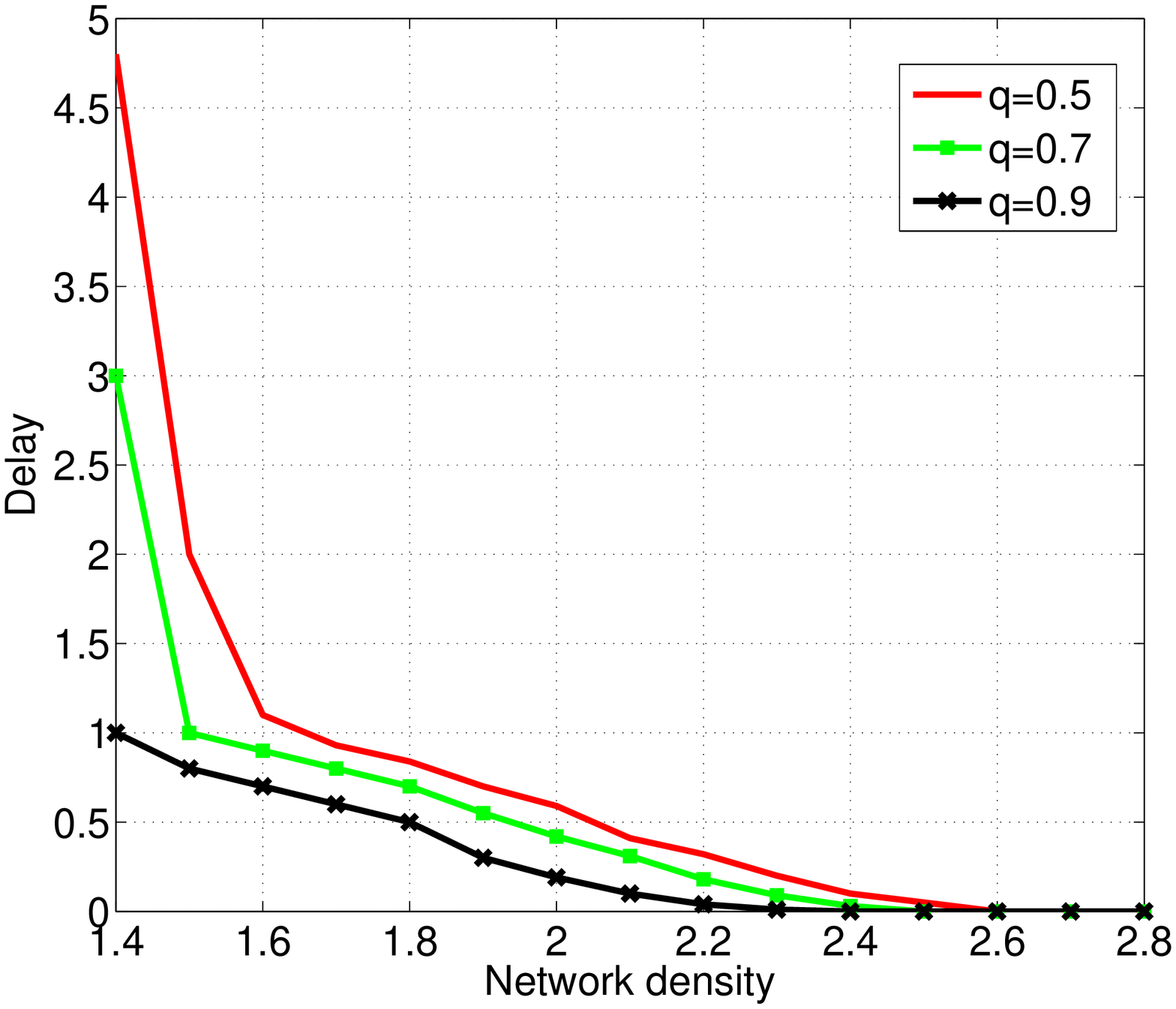}%
		\label{fig:connection}%
		}
\caption{(a) Cluster diameters and connected component diameters with various $\lambda$. (b) Upper bound, lower bound, and average value of $\gamma(\lambda)$. (c) Relation of relative network delay, energy harvesting rate, and node density.}
\end{figure*}

In the same figure, we further plot the upper and lower bounds of $\gamma(\lambda)$. 
From the figure, we see our upper bound is improved compared to that identified in~\cite{Wang11}. 
Our lower bound is slightly worse than that in~\cite{Wang11} because we use connected component to approximate cluster, and connected component is usually larger than cluster. 

To understand the relation of relative network delay, energy harvesting rate, and node density, we study $\gamma(\lambda)$ with various $q$'s, e.g., $q=0.5$, $q=0.7$, and $q=0.9$. We find $\gamma(\lambda)$ decreases when the energy harvesting rate $q$ increases. When the probability that a node harvests enough energy is $q=0.9$ (higher energy harvesting rate), the network is connected within the first time slot, \ie, the network is percolated for all $\lambda > \lambda_L = 1.4$.
\section{Conclusions}
\label{sec:conclusion}
In this paper, we study the fundamental limits of relative network delay $\gamma(\lambda)$ in an MEHWN. 
Modeling the energy availability on a node as an alternative renewal process, we prove that $\gamma(\lambda)$ is bounded by $\gamma(\lambda_L)$ and $\frac{1}{\left(  \sum_{n=2}^{\infty} \bar p_n \sum_{k=1}^{n-1} k\sum_{a=k}^{n-1} C_{n-1}^{a}  \left( \frac{1}{2} \right)^{n-1} \left( \frac{1}{k} \right)^{a-k} + 1 \right) r_0}$ where $\bar p_n$ is the upper bound of the probability that an arbitrary connected component's size is $n$. 

\begin{appendices}
\section{Proof of Upper Bound of $\gamma(\lambda)$}
Generating a sparse network $G'(\lambda_L, r_0, g)$ from the original network $G(\lambda, r_0, g)$ is shown Fig.~\ref{fig:transform} where $(\lambda_L / \lambda)$ is the probability of a node being kept from the original network. 
We see some nodes in $G(\lambda, r_0, g)$ do not exist in graph $G'(\lambda_L, r_0, g)$. 
According to the definition of $\lambda_L$, long-term connectivity is guaranteed in the network $G'(\lambda_L, r_0, g)$, i.e., there is a giant component $\mathcal{C}(G'(\lambda_L, r_0, g))$ containing most nodes from $G'(\lambda_L, r_0, g)$. 
Because $\lambda > \lambda_L$, we know $G(\lambda, r_0, g)$ is also percolated. 
\begin{figure}[h]
\centering
\includegraphics[width=3in]{./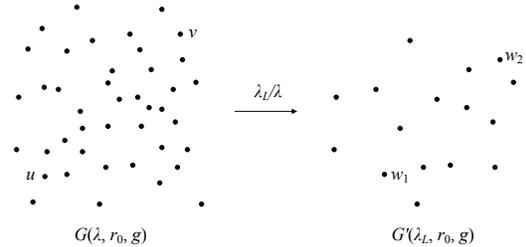}
\caption{Deriving a sparser network with node density $\lambda_L$ from a dense network with node density $\lambda$.}\label{fig:transform}
\end{figure}
For any node $u \in G(\lambda, r_0, g)$, we find a node $w$ from graph $\mathcal{C}(G'(\lambda_L, r_0, g))$ such that $d(u,w)$ is minimized.
\begin{lemma}
\label{distance}
Let $u \in G(\lambda, r_0, g)$, we have 
\[
w = \argmin \limits_{i \in \mathcal{C}(G'(\lambda_L, r_0, g))} \{ d(i, u) \},
\]
and $d(u,w) < \infty$.
\end{lemma}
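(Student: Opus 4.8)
The plan is to reduce the claim to two elementary facts: that the target set $\mathcal{C} := \mathcal{C}(G'(\lambda_L, r_0, g))$ is non-empty, and that the (thinned) point process underlying $G'(\lambda_L, r_0, g)$ is locally finite. First I would invoke the discussion preceding the lemma: by the definition of $\lambda_L$ together with the standing assumption that long-term connectivity holds, $G'(\lambda_L, r_0, g)$ is percolated, so its giant component $\mathcal{C}$ is almost surely an infinite connected set; in particular $\mathcal{C} \neq \emptyset$, and we may fix some node $w_0 \in \mathcal{C}$. Since $u$ and $w_0$ are two points of the plane, $R := d(u, w_0) < \infty$, and therefore $\inf_{i \in \mathcal{C}} d(i, u) \leq R < \infty$, so the value being minimised is already known to be finite.

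Next I would argue that this infimum is attained, so that $w$ is genuinely well defined. By the Thinning Theorem the nodes of $G'(\lambda_L, r_0, g)$ form a Poisson point process of density $\lambda_L \sqrt{g}$ (a subset of the original process), which is almost surely locally finite; hence the closed ball $\bar{B}(u, R)$ contains only finitely many nodes, and a fortiori only finitely many nodes of $\mathcal{C}$ --- at least one, namely $w_0$. Let $w$ minimise $d(\cdot, u)$ over this finite non-empty set. Every node of $\mathcal{C}$ lying outside $\bar{B}(u,R)$ has distance to $u$ strictly larger than $R \geq d(u, w_0) \geq d(u, w)$, so $w$ in fact minimises $d(\cdot, u)$ over all of $\mathcal{C}$. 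Thus $w = \argmin_{i \in \mathcal{C}(G'(\lambda_L, r_0, g))} \{ d(i,u) \}$ is well defined and $d(u, w) \leq R < \infty$, which is the assertion. (Uniqueness of the minimiser also holds almost surely, since ties have probability zero under a Poisson point process, but this is not required by the statement.)

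The only step with any real content is the non-emptiness of $\mathcal{C}$: this is exactly where the percolation of $G'(\lambda_L, r_0, g)$ enters, and it is the hypothesis I would lean on rather than re-establish. Everything else --- finiteness of Euclidean distances and attainment of the minimum via local finiteness of the Poisson point process --- is routine. The one place where I would be careful is the boundary case $\lambda = \lambda_L$ sitting exactly at the critical value, where in principle criticality could destroy the infinite cluster; I would dispatch this by appealing directly to the paper's standing assumption that long-term connectivity is present in $G'(\lambda_L, r_0, g)$, as already recorded in the text preceding the lemma.
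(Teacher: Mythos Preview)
Your argument is correct: non-emptiness of $\mathcal{C}(G'(\lambda_L,r_0,g))$ from percolation, plus local finiteness of the underlying Poisson point process, are exactly what is needed to show the infimum is finite and attained. The only quibble is the density you quote for the thinned process --- the thinning producing $G'$ retains nodes with probability $\lambda_L/\lambda$, so the density is $\lambda_L$ (or $\lambda_L\sqrt{g}$ if you also fold in the activity thinning) --- but this is immaterial, since all you use is that the density is finite.

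As for comparison with the paper: the paper does not give a proof at all but simply cites an external reference (Yeh et al.), so your route is genuinely different in that it is self-contained. What the citation buys is brevity and an appeal to an established result in the percolation literature; what your approach buys is that the reader need not chase a reference for what is, as you observe, essentially a two-line consequence of local finiteness once percolation is granted. Your remark about the boundary case $\lambda=\lambda_L$ is well taken and handled appropriately by leaning on the standing assumption in the text.
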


\begin{proof}
The proof of Lemma~\ref{distance} can be found in~\cite{Yeh07}.
\end{proof}

We consider node $w \in G'(\lambda_L, r_0, g)$ the ``connection'' of node $u$. Because $w \in G(\lambda, r_0, g)$, and $d(u, w) < \infty$, we know the expected number of hops from $u$ to $w$ is finite from the following lemma.
\begin{lemma}
\label{NumOfHop}
If the distance is finite between two nodes in a percolated network, the expected number of hops between them is also finite.
\end{lemma}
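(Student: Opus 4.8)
The plan is to reduce the statement to a known renewal/percolation estimate: in a percolated random connection model the chemical distance (number of hops along the shortest path) between two nodes at Euclidean distance $d$ grows at most linearly in $d$ in expectation, and in particular is finite whenever $d<\infty$. First I would fix the two nodes, say $u$ and $w$, both lying in the giant component $\mathcal{C}(G'(\lambda_L,r_0,g))$ (this is exactly the situation produced by Lemma~\ref{distance}), and set $d=d(u,w)<\infty$. The goal is to show $E(N(u,w))<\infty$, where $N(u,w)$ is the minimal number of edges over all paths from $u$ to $w$ in the graph.

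The key steps, in order, would be: (i) Tile the plane with axis-parallel squares of side length comparable to $r_0$, and declare a square \emph{good} if the nodes inside it are mutually connected and the square is connected (via links of length $\le r_0$) to all its neighbours; by the standard block/renormalization argument for continuum percolation (see~\cite{meester1996continuum}), one can choose the block size so that, above criticality, a good square occurs with probability arbitrarily close to $1$, dominating a supercritical site-percolation process. (ii) On the event that $u$ and $w$ both belong to the infinite good cluster, there is an open path of good squares joining the square of $u$ to the square of $w$; the number of squares on a \emph{shortest} such path is, by known results on chemical distance in supercritical percolation, at most a constant times $d$ with stretched-exponentially small deviations. (iii) Each good square contributes only a bounded number of hops (a constant depending on the block size), so the number of hops along the resulting path in $G'$ is $O(d)$ on this event. (iv) Finally, handle the ``bad'' event that $u$ or $w$ is not directly inside the infinite good cluster: because $u,w\in\mathcal{C}(G'(\lambda_L,r_0,g))$ by hypothesis and $d(u,w)<\infty$, a detour of finite (random but integrable) length connects each of them to a nearby good square, exactly as in the proof of Lemma~\ref{distance} cited from~\cite{Yeh07}; taking expectations and using that these detour lengths have exponential tails yields $E(N(u,w))<\infty$.

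The main obstacle will be step (ii)--(iii): one must invoke (or re-derive at the level of a citation) the linear-growth bound for the chemical distance in supercritical continuum percolation and verify that the block construction transfers this bound faithfully to hop count in $G'(\lambda_L,r_0,g)$ rather than merely to the number of traversed blocks. Everything else — the reduction to blocks, the bounded contribution per block, and the integrability of the initial/final detours — is routine once that ingredient is in place. I would therefore state step (ii) as a direct appeal to the percolation literature (e.g.~\cite{meester1996continuum}) and spend the bulk of the written proof on making the block-to-hop translation explicit and on the detour argument, which is where the hypotheses $u,w\in\mathcal{C}$ and $d(u,w)<\infty$ are actually used.
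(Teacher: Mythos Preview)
The paper does not actually prove this lemma: its entire proof is the single sentence ``See Proposition~4 in~\cite{Dousse04}.'' Your proposal therefore goes well beyond what the paper does, supplying a genuine sketch where the paper only defers to a citation. The renormalization/block argument you outline is indeed the standard route to such chemical-distance bounds in continuum percolation, and is almost certainly what underlies the cited result in~\cite{Dousse04}; in that sense your approach is consistent with, but far more explicit than, the paper's.

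One small correction to your setup: in the paper's application of the lemma, the hop count is taken in the \emph{supercritical} network $G(\lambda,r_0,g)$ with $\lambda>\lambda_L$, not in $G'(\lambda_L,r_0,g)$. You place both endpoints in $\mathcal{C}(G'(\lambda_L,r_0,g))$ and work at density $\lambda_L$, which is exactly critical; at criticality the linear-growth bound on chemical distance in step~(ii) is delicate or false, so the block argument as stated would not go through cleanly there. If you rewrite the sketch for the strictly supercritical network $G(\lambda,r_0,g)$ --- which is where the lemma is actually invoked --- then steps~(i)--(iv) are all standard and the proposal is sound.
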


\begin{proof}
See Proposition 4 in~\cite{Dousse04}.
\end{proof}

For any two nodes $u, v \in G(\lambda, r_0, g)$, we can find two ``connections'' $w_1, w_2 \in G'(\lambda_L, r_0, g)$. Note that it is possible that $u, v \in \mathcal{C}(G'(\lambda_L, r_0, g))$. In this case, we have $u=w_1$ and $v=w_2$.
\begin{definition}
\label{TwoNodes}
\begin{equation}
w_1 = \argmin \{ d(i, u) \},w_2 = \argmin \{ d(i, v) \}
\end{equation} 
\end{definition}
Due to Lemma~\ref{distance}, we have $d(w_1, u) < \infty$ and $d(w_2, v) < \infty$. Due to Lemma~\ref{NumOfHop}, we also have $E[N_\lambda \left( d(w_1, u) \right) ] < \infty $ and $E[N_\lambda \left( d(w_2, v) \right) ] < \infty $. 

Now we are ready to prove the Theorem~\ref{thm1}.

\begin{proof}
According to Definition~\ref{DelayDef}, $u$ and $v$ are two randomly selected nodes from $G(\lambda, r_0, g)$. It is possible that $u, v \in \mathcal{C}(G'(\lambda_L, r_0, g))$, with the probability of $(\lambda_L / \lambda)^2$. In this case, any path $\pi \in G'(\lambda_L, r_0, g)$ from $u$ to $v$ must exist in $G(\lambda, r_0, g)$, i.e., $\gamma(\lambda) \leq \gamma(\lambda_L)$. 

Now let's look at the case where $u, v \notin \mathcal{C}(G'(\lambda_L, r_0, g))$. According to Definition~\ref{TwoNodes}, we can identify two nodes $w_1, w_2 \in \mathcal{C}(G'(\lambda_L, r_0, g))$ such that $w_1$ and $w_2$ are the closest nodes to $u$ and $v$, respectively. Due to Lemmas~\ref{distance} and~\ref{NumOfHop}, we have $d(u, w_1) < \infty $, $d(v, w_2) < \infty $, $E[N_{\lambda} \left( d(u, w_1) \right) ] < \infty $ and $E[N_{\lambda} \left( d(v, w_2) \right) ] < \infty $. Therefore, $\gamma(\lambda)$ can be written as
\[
\begin{split}
\gamma(\lambda) & = \lim_{d(u,v) \to \infty } \frac{T_\lambda(u, v)}{d(u, v)} \\
                & \leq \lim_{d(u,v) \to \infty } \frac{T_\lambda(u, w_1) + T_\lambda(w_1, w_2) + T_\lambda(v, w_2) }{d(w_1, w_2) - d(u,w_1)-d(v,w_2)}.
\end{split}
\]

For a particular path $\pi_m$ from $u$ to $w_1$, we can calculate the delay of crossing it as 
\begin{equation*}
T_p(\pi_m) = \sum_{e \in \pi_m} {T(e)} = N_\lambda(d(u, w_1))E[T(e)].
\end{equation*}
On the other hand, $T_\lambda(u, w_1)$ denotes the first-passage time from $u$ to $w_1$, it is smaller than the delay of transmitting data along any path. Therefore, we have $$T_\lambda(u, w_1) \leq N_\lambda(d(u, w_1))E[T(e)],$$ and
\[
\begin{split}
&\gamma(\lambda) \leq \lim_{d(u,v) \to \infty } \\
& \frac{\left[N_\lambda(d(u, w_1)) + N_\lambda(d(v, w_2)) \right]E[T(e)] + T_\lambda(w_1, w_2)}{d(w_1, w_2) - d(u,w_1)-d(v,w_2)}.
\end{split}
\]
Since $N_\lambda(d(u, w_1)) < \infty$, $N_\lambda(d(v, w_2)) < \infty$, $d(u, w_1) < \infty$, $d(v, w_2) < \infty$, and $E[T(e)] < \infty$, we obtain
\begin{equation*}
\gamma(\lambda) \leq \lim_{d(w_1,w_2) \to \infty } \frac{T_\lambda(w_1, w_2)}{d(w_1, w_2)} 
\end{equation*}

As network $G'(\lambda_L, r_0, g)$ is derived from $G(\lambda, r_0, g)$ by randomly removing nodes, a path $\pi \in G'(\lambda_L, r_0, g)$ from $w_1$ to $w_2$ must exist in $G(\lambda, r_0, g)$. Therefore, 
\begin{equation}
\label{bound}
\gamma(\lambda) \leq \lim_{d(w_1,w_2) \to \infty } \frac{T_{\lambda_L}(w_1, w_2)}{d(w_1, w_2)} = \gamma({\lambda_L})
\end{equation}

Applying the same technique to the cases where only $u$ or $v$ in $ \mathcal{C}(G'(\lambda_L, r_0, g))$, we still have $\gamma(\lambda) \leq \gamma({\lambda_L})$.
\end{proof}

\end{appendices}
}

\bibliographystyle{IEEEtran}
\bibliography{energy}

\end{document}